\documentclass[a4paper,10pt]{article}
\pdfoutput=1 % if your are submitting a pdflatex (i.e. if you have
             % images in pdf, png or jpg format)

\usepackage{jheppub} % for details on the use of the package, please
                     % see the JHEP-author-manual
\usepackage{amssymb,amstext,amsthm,amsfonts,mathrsfs,amsbsy,amsmath,array,multirow}
\usepackage[T1]{fontenc} % if needed
\newtheorem{thm}{Theorem}[section]

\newtheorem{prop}[thm]{Proposition}

\newtheorem{cor}[thm]{Corollary}

\newtheorem{definition}[thm]{Definition}

%\theoremstyle{remark}

%\rightline{\small IPhT-...}

\title{\boldmath M-theory on non-K\"ahler manifolds}

%% %simple case: 2 authors, same institution
%% \author{A. Uthor}
%% \author{and A. Nother Author}
%% \affiliation{Institution,\\Address, Country}

% more complex case: 4 authors, 3 institutions, 2 footnotes
\author{C. S. Shahbazi}

% The "\note" macro will give a warning: "Ignoring empty anchor..."
% you can safely ignore it.

\affiliation{Institut de Physique Th\'eorique, CEA Saclay.}

% e-mail addresses: one for each author, in the same order as the authors
\emailAdd{carlos.shabazi-alonso@cea.fr}

\abstract{\vspace{0.1cm}\\ We show that supersymmetric M-theory compactifications to three-dimensional Minkowski space-time preserving $\mathcal{N}=2$ supersymmetry allow for a class of internal manifolds more general than the Calabi-Yau one, namely the class of locally conformally K\"ahler manifolds which locally carry a preferred Calabi-Yau structure.}

%\keywords{keyword one, keyword two}
%\arxivnumber{1402.xxxx}

\begin{document}
\maketitle
\flushbottom

%%%%%%%%%%%%%%%%%%%%%%%%%%%%%%%%%%%%%%%%%%%%%%%%%%%%%%%%%%%%%%%%%%%%%%%%%%%%%%%%%%%%%%%%%%%%%%%%%%%%%%%%%%%%%%%%%%%%%%%%%%%%
%%%%%%%%%%%%%%%%%%%%%%%%%%%%%%%%%%%%%%%%%%%%%%%%%%%%%%%%%%%%%%%%%%%%%%%%%%%%%%%%%%%%%%%%%%%%%%%%%%%%%%%%%%%%%%%%%%%%%%%%%%%%
%%%%%%%%%%%%%%%%%%%%%%%%%%%%%%%%%%%%%%%%%%%%%%%%%%%%%%%%%%%%%%%%%%%%%%%%%%%%%%%%%%%%%%%%%%%%%%%%%%%%%%%%%%%%%%%%%%%%%%%%%%%%
%%%%%%%%%%%%%%%%%%%%%%%%%%%%%%%%%%%%%%%%%%%%%%%%%%%%%%%%%%%%%%%%%%%%%%%%%%%%%%%%%%%%%%%%%%%%%%%%%%%%%%%%%%%%%%%%%%%%%%%%%%%%

\section{Introduction}
\label{sec:introduction}

%%%%%%%%%%%%%%%%%%%%%%%%%%%%%%%%%%%%%%%%%%%%%%%%%%%%%%%%%%%%%%%%%%%%%%%%%%%%%%%%%%%%%%%%%%%%%%%%%%%%%%%%%%%%%%%%%%%%%%%%%%%%
%%%%%%%%%%%%%%%%%%%%%%%%%%%%%%%%%%%%%%%%%%%%%%%%%%%%%%%%%%%%%%%%%%%%%%%%%%%%%%%%%%%%%%%%%%%%%%%%%%%%%%%%%%%%%%%%%%%%%%%%%%%%
%%%%%%%%%%%%%%%%%%%%%%%%%%%%%%%%%%%%%%%%%%%%%%%%%%%%%%%%%%%%%%%%%%%%%%%%%%%%%%%%%%%%%%%%%%%%%%%%%%%%%%%%%%%%%%%%%%%%%%%%%%%%
%%%%%%%%%%%%%%%%%%%%%%%%%%%%%%%%%%%%%%%%%%%%%%%%%%%%%%%%%%%%%%%%%%%%%%%%%%%%%%%%%%%%%%%%%%%%%%%%%%%%%%%%%%%%%%%%%%%%%%%%%%%%

Supersymmetry has been linked in many different and profound ways to geometry since its discovery in the seventies, see for example \cite{2006hep.th....5148Z,Denef:2008wq,2009arXiv0901.1881O,Koerber:2010bx,MooreStrings} for more information and further references. In particular, supersymmetric solutions to Supergravity theories are closely linked to spinorial geometry, since they consist of manifolds equipped with spinors constant respect to a particular connection, whose specific form depends on the Supergravity theory under consideration \cite{Ortin:2004ms,Freedman:2012zz}. The  global existence of spinors and the other Supergravity fields usually constrains the global geometry of the manifold. However, the final resolution of the Supergravity equations of motion usually resorts to the use of adapted coordinates to the problem at a local patch of the manifold. Once we have solved the Supergravity equations of motion, a really hard problem by itself, we have to face another difficulty: in order to fully understand the solution, we need to extract as much information as possible about the global geometry of the manifold just from the existence of some explicit tensors and spinors, which we only know at a local patch. In other words, we want to know which manifolds are compatible with a particular set of tensors and spinors whose form is only known locally. 

In fact, this is not a new problem in Theoretical Physics or Differential Geometry. It was already encountered soon after the discovery of General Relativity. Solving General Relativity's equations of motion\footnote{In contrast with what it is usually implied in the mathematical literature, General Relativity's equations are in general not just the requirement of Ricci-flatness of the Levi-Civita connection.} usually means solving the metric at a local patch of a manifold which is not known a priori. In order to find which would be the physically meaningful manifold compatible with a locally defined metric, physicists back then created a procedure, by now textbook material \cite{Ellis}, to obtain the \emph{maximally analytic extension} of a given local patch endowed with a locally defined metric. In doing so for a \emph{simple} solution, namely the Schwarzschild black-hole, one can find for example that the corresponding manifold can indeed be covered by a single system of coordinates and it is thus homeomorphic to an open set in $\mathbb{R}^{4}$. This procedure has been carried out in other popular solutions of General Relativity, for example the Reissner-Nordstr\"om and the Kerr black holes, which are relatively simple solutions compared to the kind of solutions that one obtains in Supergravity, where finding the maximally analytic extension associated to a local solution is more difficult due to their complexity. 

Still, for supersymmetric solutions of Supergravity some information about the global geometry of the manifold can be obtained simply from the analysis of the existence of constant spinors: for example it may be possible to show that the manifold is equipped with various geometric structures, like Killing vectors or complex, K\"ahler, Hyperk\"ahler, Quaternionic... appropriately defined structures. This already constrains the problem to a relatively specific class of manifolds. However, in performing such analysis sometimes there are involved various kinds of subtle choices, which, if modified, would yield a different global solution, a different manifold which however is locally indistinguishable from the unmodified one, since they exactly carry the same structure at the local level. 

In this note we are going to precisely modify one condition that had been implicitly assumed so far in String-Theory warped compactifications \cite{Grana:2005jc}: we are going to consider that the warp-factor is not a globally defined function on the compact manifold, but only, given a good open covering, locally defined on each open set. In order to do this consistently we will keep in mind that the physical fields of the theory must remain as well-defined tensors on the manifold, as it is required from physical considerations. The warp factor will turn out to be globally described as a section of an appropriate line bundle.

We are going to apply the previous modification to M-theory compactifications to three-dimensional Minkowski space-time preserving $\mathcal{N}=2$ supersymmetries. M-theory compactifications to three dimensions preserving different amounts of supersymmetry have been extensively studied in the literature \cite{Becker:1996gj,Becker:2000jc,Martelli:2003ki,Tsimpis:2005kj,Condeescu:2013hya,Prins:2013wza,Babalic:2014dea,Babalic:2014fua}. In references \cite{Babalic:2014dea,Babalic:2014fua} a very rigorous and complete study of the geometry of the internal eight-dimensional manifold has been carried out using the theory of codimension-one foliations, which turns out to be the right mathematical tool to characterize it, as suggested in \cite{Grana:2014rpa}. 

Coming back to the case of compactifications to three-dimensional Minkowski space-time preserving $\mathcal{N}=2$ supersymmetries, the analysis of the seminal reference \cite{Becker:1996gj} concludes that, among other things, the internal eight-dimensional manifold is a Calabi-Yau four-fold, although the physical metric is not the Ricci-flat metric but conformally related to it. This class of M-theory compactifications is very important for F-theory \cite{Vafa:1996xn} applications, since compactifications of F-theory are in fact defined through them by assuming that the internal manifold is an elliptically fibered Calabi-Yau manifold, see \cite{Weigand:2010wm} for a review and further references.

By assuming that the warp factor is not a global function, we will be able to generalize the result of reference \cite{Becker:1996gj}: we will find that the internal manifold must be a locally conformally K\"ahler manifold \cite{Libermann1,Libermann2,Vaismanlck1976} locally equipped with a preferred Calabi-Yau structure. Evidently, standard Calabi-Yau manifolds are a particular case inside this class. Let us say that this note is of course not the first attempt to extend F-theory compactifications beyond the Calabi-Yau result; see references \cite{Bonetti:2013fma,Bonetti:2013nka} for applications of $Spin(7)$-manifolds to F-theory compactifications.

The consequences of compactifying M-theory on a locally conformally K\"ahler manifold instead of a Calabi-Yau four-fold are manifold since the former is not Ricci-flat in a compatible way and has different topology than the latter. This deserves further study. In particular we think that it would be interesting to obtain, if possible, the effective action of a M-theory compactification on a non-Calabi-Yau locally conformally K\"ahler manifold. 

The plan of this note goes as follows. In section \ref{sec:Mtheory3d} we review, following \cite{Becker:1996gj}, the analysis of M-theory compactifications to three-dimensional Minkowski space-time preserving $\mathcal{N}=2$ supersymmetries. In section \ref{sec:globallocalsusy} we modify the procedure explained in section \ref{sec:Mtheory3d} by considering a warp-factor which is not a globally defined function on the internal manifold. In particular we obtain that, due to this modification, the internal geometry is allowed to be of the locally conformally K\"ahler type. In section \ref{sec:l.c.K.} we elaborate on the locally conformally K\"ahler manifolds obtained in section \ref{sec:globallocalsusy}, comparing them to standard Calabi-Yau four-folds and giving a particular example, the complex Hopf manifold. In section \ref{sec:conclusions} we conclude with some final remarks. 

%%%%%%%%%%%%%%%%%%%%%%%%%%%%%%%%%%%%%%%%%%%%%%%%%%%%%%%%%%%%%%%%%%%%%%%%%%%%%%%%%%%%%%%%%%%%%%%%%%%%%%%%%%%%%%%%%%%%%%%%%%%%
%%%%%%%%%%%%%%%%%%%%%%%%%%%%%%%%%%%%%%%%%%%%%%%%%%%%%%%%%%%%%%%%%%%%%%%%%%%%%%%%%%%%%%%%%%%%%%%%%%%%%%%%%%%%%%%%%%%%%%%%%%%%
%%%%%%%%%%%%%%%%%%%%%%%%%%%%%%%%%%%%%%%%%%%%%%%%%%%%%%%%%%%%%%%%%%%%%%%%%%%%%%%%%%%%%%%%%%%%%%%%%%%%%%%%%%%%%%%%%%%%%%%%%%%%
%%%%%%%%%%%%%%%%%%%%%%%%%%%%%%%%%%%%%%%%%%%%%%%%%%%%%%%%%%%%%%%%%%%%%%%%%%%%%%%%%%%%%%%%%%%%%%%%%%%%%%%%%%%%%%%%%%%%%%%%%%%%

\section{F-theory compactifications}
\label{sec:Mtheory3d}

%%%%%%%%%%%%%%%%%%%%%%%%%%%%%%%%%%%%%%%%%%%%%%%%%%%%%%%%%%%%%%%%%%%%%%%%%%%%%%%%%%%%%%%%%%%%%%%%%%%%%%%%%%%%%%%%%%%%%%%%%%%%
%%%%%%%%%%%%%%%%%%%%%%%%%%%%%%%%%%%%%%%%%%%%%%%%%%%%%%%%%%%%%%%%%%%%%%%%%%%%%%%%%%%%%%%%%%%%%%%%%%%%%%%%%%%%%%%%%%%%%%%%%%%%
%%%%%%%%%%%%%%%%%%%%%%%%%%%%%%%%%%%%%%%%%%%%%%%%%%%%%%%%%%%%%%%%%%%%%%%%%%%%%%%%%%%%%%%%%%%%%%%%%%%%%%%%%%%%%%%%%%%%%%%%%%%%
%%%%%%%%%%%%%%%%%%%%%%%%%%%%%%%%%%%%%%%%%%%%%%%%%%%%%%%%%%%%%%%%%%%%%%%%%%%%%%%%%%%%%%%%%%%%%%%%%%%%%%%%%%%%%%%%%%%%%%%%%%%%

In this section we briefly review the standard analysis, following the seminal reference \cite{Becker:1996gj}, of supersymmetric F-theory compactifications, or in other words, supersymemtric compactifications of eleven-dimensional Supergravity to three-dimensional Minkowski space-time preserving $\mathcal{N}=2$ supersymmetry. We will consider the space-time to be an eleven-dimensional oriented spin manifold $M$. The matter content of eleven-dimensional Supergravity \cite{Cremmer:1978km} is then given by a Lorentzian metric $\mathsf{g}$, a closed four-form $\mathsf{G}\in\Omega^{4}(M)$ and the Rarita-Schwinger field $\Psi\in\Gamma(S\otimes T^{\ast}M)$, where $S$ is the rank-32 symplectic spinor vector bundle of $M$, namely a bundle of Clifford-modules. The supersymmetry condition corresponds to the vanishing of the Rarita-Schwinger supersymmetry transformation:

\begin{equation}
\label{eq:supersymmetry}
\delta_{\epsilon}\Psi = D\epsilon = 0\, ,
\end{equation}

\noindent
where $\epsilon\in\Gamma(S)$ is the spinor generating the supersymmetry transformation and $D\colon\Gamma(S)\to\Gamma( T^{\ast}M\otimes S)$ is a particular Clifford-valued connection given in terms of $\mathsf{g}$ and $\mathsf{G}$ and whose explicit form we shall not need here. For F-theory compactifications we consider the space-time to be a topologically trivial product of three-dimensional Minkowski space $\mathbb{R}_{1,2}$ and an eight-dimensional Riemannian, compact, spin manifold $M_{8}$

\begin{equation}
M = \mathbb{R}_{1,2}\times M_{8}\, .
\end{equation}

\noindent
The metric and the four-form are taken to be given by

\begin{eqnarray}
\mathsf{g} &=& \Delta^{2}\, \delta_{1,2} + g\nonumber\\
\mathsf{G} &=& \mathrm{Vol}\wedge\xi + G\, ,
\end{eqnarray}

\noindent
where $\Delta\in C^{\infty}(M_{8})$ is a function, $\delta_{1,2}$ and Vol are the Minkowski metric and the volume form in $\mathbb{R}_{1,2}$, $g$ is the Riemannian metric in $M_{8}$, and $G\in\Omega^{4}(M_{8})$ is a closed four-form in the internal space. Finally, the supersymmetry spinor is decomposed as

\begin{equation}
\epsilon = \chi\otimes\eta\, , \qquad \chi\in\Gamma(S_{1,2})\, , \quad\eta\in\Gamma(S^{\mathbb{C}}_{8})\, , 
\end{equation}

\noindent
where $S_{1,2}$ is the rank-two real spinor bundle over $\mathbb{R}_{1,2}$ and $S^{\mathbb{C}}_{8}$ is the complexified rank-eight spinor bundle over $M_{8}$. Notice that $\eta = \eta_{1}+i\eta_{2}$, where $\eta_{1}, \eta_{2}\in \Gamma(S_{8})$ are Majorana-Weyl spinors of the same chirality. Imposing the previous structure on $M$, together with supersymmetry condition \eqref{eq:supersymmetry}, imposes restrictions on the flux $\mathsf{G}$ and constrains $(M_{8},g)$ at the topological as well as the differentiable level \cite{Becker:1996gj}:

\begin{itemize}

%%%%%%%%%%%%%%%%%%%%%%%%%%%%%%%%%%%%%%%%%%%%%%%%%%%%%%%%%%%%%%%%%%%%%%%%%%%%%%%%%%%%%%%%%%%%%%%%%%%%%%%%%%%%%%%%%%%%%%%%%%%%

\item  $M_{8}$ is equipped with a $SU(4)$-structure induced by $\eta_{1}$ and $\eta_{2}$, which we assume everywhere independent and non-vanishing. The topological obstruction for the existence of nowhere vanishing real spinor, or in other words, the existence of a $Spin(7)$-structure is given by

\begin{equation}
\label{eq:Spin7obstruction}
p^{2}_{1} - 4p_{2} +8\chi(M_{8}) = 0\, ,
\end{equation}

\noindent
where $p^{2}_{1}$ and $p_{2}$ are the integrated $P^{2}_{1}$ and $P_{2}$ Poyntriagin classes, and $\chi(M_{8})$ is the Euler characteristic of $M_{8}$.

%%%%%%%%%%%%%%%%%%%%%%%%%%%%%%%%%%%%%%%%%%%%%%%%%%%%%%%%%%%%%%%%%%%%%%%%%%%%%%%%%%%%%%%%%%%%%%%%%%%%%%%%%%%%%%%%%%%%%%%%%%%%

\item $M_{8}$ is equipped with a globally defined almost complex structure $J$, a real non-degenerate (1,1)-form $\omega = g\cdot J$ and a (4,0)-form $\Omega$ constructed as bilinears of $\eta$. The quadruplet

\begin{equation}
\left\{g, J, \omega, \Omega\right\}
\end{equation}

\noindent
makes $M_{8}$ into an almost hermitean manifold with topologically trivial canonical bundle.

%%%%%%%%%%%%%%%%%%%%%%%%%%%%%%%%%%%%%%%%%%%%%%%%%%%%%%%%%%%%%%%%%%%%%%%%%%%%%%%%%%%%%%%%%%%%%%%%%%%%%%%%%%%%%%%%%%%%%%%%%%%%

\item Let us make the following conformal transformation

\begin{equation}
\label{eq:conformalglobal}
\tilde{g} = \Delta g\, , \qquad \tilde{\eta} = \Delta^{-\frac{1}{2}}\eta\, ,
\end{equation}

\noindent
which implies

\begin{equation}
\tilde{J} =  J\, , \qquad \tilde{\omega} = \Delta \omega\, , \qquad \tilde{\Omega} = \Delta^{2}\Omega\, .
\end{equation}

\noindent
The usefulness of this conformal transformation comes from the fact that the transformed spinors are parallel with respect to the transformed connection, namely 

\begin{equation}
\label{eq:constanttildeeta}
\tilde{\nabla}\tilde{\eta} = 0\, ,
\end{equation}

\noindent
where $\tilde{\nabla}$ is the Levi-Civita connection associated to $\tilde{g}$. Equation \eqref{eq:constanttildeeta} automatically implies that $M_{8}$ has $SU(4)$-holonomy and thus $M_{8}$ is a Calabi-Yau four-fold. In particular we have

\begin{equation}
\tilde{\nabla} \tilde{J} = 0\, , \qquad \tilde{\nabla} \tilde{\omega} = 0\, , \qquad \tilde{\nabla}\tilde{\Omega} = 0\, . 
\end{equation} 

\noindent
We can also see that $M_{8}$ is a Calabi-Yau four-fold as follows, which might be more natural from the algebraic-geometry point of view: $\left(\tilde{g},\tilde{\omega},\tilde{J}\right)$ is the compatible triplet of a complex structure $\tilde{J}$, a symplectic structure $\tilde{\omega}$ and a Riemannian metric $\tilde{g}$ making $M_{8}$ into a K\"ahler manifold. Since $\tilde{\Omega}$ is an holomorphic (4,0)-form, the canonical bundle is holomorphically trivial, which together with the K\"ahler property of $M_{8}$, implies that it is a Calabi-Yau four-fold. 

%%%%%%%%%%%%%%%%%%%%%%%%%%%%%%%%%%%%%%%%%%%%%%%%%%%%%%%%%%%%%%%%%%%%%%%%%%%%%%%%%%%%%%%%%%%%%%%%%%%%%%%%%%%%%%%%%%%%%%%%%%%%
 
\item The one-form $\xi$ is given by de derivative of the warp factor $\Delta$ as follows

\begin{equation}
\label{eq:xiglobal}
\xi = d\left( \Delta^{3}\right)\, ,
\end{equation}

\noindent
and the four-form $G$ is subject to the constraint

\begin{equation}
\label{eq:Gequationglobal}
\iota_{v} G\cdot \eta = 0\, , \qquad v\in\mathfrak{X}(M_{8})\, .
\end{equation}

\noindent
Once we know that $M_{8}$ is a Calabi-Yau four-fold, equation \eqref{eq:Gequationglobal} can be solved by taking $G$ to be $(2,2)$ and primitive. 
\end{itemize}

\noindent
From the previous analysis we conclude that  if we take $M_{8}$ to be a Calabi-Yau manifold, $G\in H^{(2,2)}(M_{8})$ and primitive and $\xi$ as in equation \eqref{eq:xiglobal}, we solve the supersymmetry conditions \eqref{eq:supersymmetry} and we obtain a supersymmetric compactification background of eleven-dimensional Supergravity to three-dimensional Minkowski space. Note that the physical metric $g$ is conformally related to the Ricci-flat metric $\tilde{g}$, and by Yau's theorem we know that this is the unique Ricci-flat metric in its K\"ahler class, and thus it is, strictly speaking, the Calabi-Yau metric of $M_{8}$.

%%%%%%%%%%%%%%%%%%%%%%%%%%%%%%%%%%%%%%%%%%%%%%%%%%%%%%%%%%%%%%%%%%%%%%%%%%%%%%%%%%%%%%%%%%%%%%%%%%%%%%%%%%%%%%%%%%%%%%%%%%%%
%%%%%%%%%%%%%%%%%%%%%%%%%%%%%%%%%%%%%%%%%%%%%%%%%%%%%%%%%%%%%%%%%%%%%%%%%%%%%%%%%%%%%%%%%%%%%%%%%%%%%%%%%%%%%%%%%%%%%%%%%%%%
%%%%%%%%%%%%%%%%%%%%%%%%%%%%%%%%%%%%%%%%%%%%%%%%%%%%%%%%%%%%%%%%%%%%%%%%%%%%%%%%%%%%%%%%%%%%%%%%%%%%%%%%%%%%%%%%%%%%%%%%%%%%
%%%%%%%%%%%%%%%%%%%%%%%%%%%%%%%%%%%%%%%%%%%%%%%%%%%%%%%%%%%%%%%%%%%%%%%%%%%%%%%%%%%%%%%%%%%%%%%%%%%%%%%%%%%%%%%%%%%%%%%%%%%%

\section{Global patching of the local supersymmetry conditions}
\label{sec:globallocalsusy}

%%%%%%%%%%%%%%%%%%%%%%%%%%%%%%%%%%%%%%%%%%%%%%%%%%%%%%%%%%%%%%%%%%%%%%%%%%%%%%%%%%%%%%%%%%%%%%%%%%%%%%%%%%%%%%%%%%%%%%%%%%%%
%%%%%%%%%%%%%%%%%%%%%%%%%%%%%%%%%%%%%%%%%%%%%%%%%%%%%%%%%%%%%%%%%%%%%%%%%%%%%%%%%%%%%%%%%%%%%%%%%%%%%%%%%%%%%%%%%%%%%%%%%%%%
%%%%%%%%%%%%%%%%%%%%%%%%%%%%%%%%%%%%%%%%%%%%%%%%%%%%%%%%%%%%%%%%%%%%%%%%%%%%%%%%%%%%%%%%%%%%%%%%%%%%%%%%%%%%%%%%%%%%%%%%%%%%
%%%%%%%%%%%%%%%%%%%%%%%%%%%%%%%%%%%%%%%%%%%%%%%%%%%%%%%%%%%%%%%%%%%%%%%%%%%%%%%%%%%%%%%%%%%%%%%%%%%%%%%%%%%%%%%%%%%%%%%%%%%%

In this section we are going to slightly generalize the set-up reviewed \ref{sec:Mtheory3d} by considering a situation where the conformal transformation \eqref{eq:conformalglobal} cannot be performed globally, but only locally. We will be still satisfying the eleven-dimensional Supergravity supersymmetry conditions, which are local, but globally we will be able to construct a manifold that is not necessarily a Calabi-Yau four-fold but of a more general type.

As we did in section \ref{sec:Mtheory3d}, we will consider the space-time to be a topologically trivial product of three-dimensional Minkowski space $\mathbb{R}_{1,2}$ and an eight-dimensional Riemannian, compact spin manifold $M_{8}$

\begin{equation}
M = \mathbb{R}_{1,2}\times M_{8}\, .
\end{equation}

\noindent
The supersymmetry spinor is also decomposed exactly as it was done in section \ref{sec:Mtheory3d}, namely

\begin{equation}
\epsilon = \chi\otimes\eta\, , \qquad \chi\in\Gamma(S_{1,2})\, , \quad\eta\in\Gamma(S^{\mathbb{C}}_{8})\, , 
\end{equation}

\noindent
Hence, as it happened in section \ref{sec:Mtheory3d}, $M_{8}$ is equipped with two everywhere independent and non-vanishing  Majorana-Weyl spinors, which implies again that the structure group of $M_{8}$ can be reduced to $SU(4)$. Therefore $M_{8}$ still has to satisfy the obstruction \eqref{eq:Spin7obstruction}.

Let $ \left\{ U_{a}\right\}_{a\in I}$ be a good open covering of $M_{8}$ and let us equip every open set $U_{a}$ with a function $\Delta_{a}\in C^{\infty}(M_{a})$ and a closed one form $\xi_{a}\in\Omega^{1}(U_{a})$, so we can consider the triplet $\left\{ U_{a}, \Delta_{a},\xi_{a}\right\}_{a\in I}$ on $M_{8}$. We will assume that the Lorentzian metric $\mathsf{g}$ and the four-form $\mathsf{G}$ can be written, for every open set $U_{a}\subset M$, as follows:

\begin{eqnarray}
\mathsf{g}|_{U_{a}} &=& \Delta^{2}_{a}\, \delta_{1,2} + g|_{U_{a}}\, ,\nonumber\\
\mathsf{G}|_{U_{a}} &=& \mathrm{Vol}\wedge\xi_{a} + G|_{U_{a}}\, ,
\end{eqnarray}

\noindent
where $g$ is a Riemannian metric in $M_{8}$ and $G$ is a closed four-form in $M_{8}$. In order to keep a clean exposition, we are not explicitly writing the atlas that we are using for $\mathbb{R}_{1,2}$, which, for each $U_{a}$ consists of an open set which we take to be the whole $\mathbb{R}_{1,2}$ and its corresponding coordinate system $\phi_{a}$. More precisely, the atlas that we are considering for the topologically trivial product $M=\mathbb{R}_{1,2}\times M_{8}$ is the following:

\begin{equation}
\mathcal{A} = \left\{V_{a}\times U_{a}, \phi_{a}\times\psi_{a}\right\}_{a\in I}\, ,
\end{equation}

\noindent
where $V_{a} = \mathbb{R}_{1,2}$ for every $a\in I$, $\phi_{a}$ are the coordinates in $V_{a}$ and $\psi_{a}$ are the corresponding local coordinates in $U_{a}$. The atlas $\mathcal{A}$ is obviously not the simplest atlas for $M$, but anyway it is an admissible atlas which gives $M$ the structure of a differentiable product manifold. We will see in a moment that the consistency of the procedure requires very specific changes of coordinates $\phi_{a}\circ\phi^{-1}_{b}\colon\mathbb{R}^{3}\to\mathbb{R}^{3}$, $U_{a}\cap U_{b}\neq\emptyset$. The one-form $\xi$ is given again by equation \eqref{eq:xiglobal}, only this time the result is valid locally in $U_{a}$:

\begin{equation}
\xi_{a} = d\left( \Delta^{3}_{a}\right)\, .
\end{equation}

\noindent
Now, in order for the physical fields $(\mathsf{g},\mathsf{G})$ to be well defined, they must be tensors on $M$. This is equivalent to, given any another open set $U_{b}$ such that $U_{a}\cap U_{b}\neq \emptyset$, the following condition in $U_{a}\cap U_{b}$:

\begin{eqnarray}
\label{eq:patichinggG}
\Delta^{2}_{a}\, \delta_{1,2} + g|_{U_{a}\cap U_{b}} &=& \Delta^{2}_{b}\, \delta_{1,2} + g|_{U_{a}\cap U_{b}}\, ,\nonumber\\ 
\mathrm{Vol}\wedge\xi_{a} + G|_{U_{a}\cap U_{b}} &=& \mathrm{Vol}\wedge\xi_{b} + G|_{U_{a}\cap U_{b}}\, .
\end{eqnarray}

\noindent
Equation \eqref{eq:patichinggG} is equivalent to:

\begin{eqnarray}
\label{eq:patichinggGII}
\Delta^{2}_{a}\, \delta_{1,2} &=& \Delta^{2}_{b}\, \delta_{1,2} \, ,\nonumber\\ 
\mathrm{Vol}\wedge\xi_{a}  &=& \mathrm{Vol}\wedge\xi_{b} \, .
\end{eqnarray}

\noindent
in $U_{a}\cap U_{b}$, up to of course a change of coordinates, which in turn is reflected as a symmetry of the equations of motion. Therefore, we must define the difference between $\Delta_{a}$ and $\Delta_{b}$ in $U_{a}\cap U_{b}$ to be such that it can be absorbed by means of a coordinate transformation in $\mathbb{R}_{1,2}$. The only possibility is: 

\begin{equation}
\label{eq:Deltadifference}
\Delta_{a} = \lambda_{ab} \Delta_{b}\, , 
\end{equation}

\noindent
in $U_{a}\cap U_{b}$, where $\lambda_{ab}\colon U_{a}\cap U_{b}\to \mathbb{R}$ is a constant function. Indeed, the multiplicative factor \eqref{eq:Deltadifference} can be absorbed by means of the following change of coordinates in $\mathbb{R}_{1,2}$: 

\begin{eqnarray}
\phi_{a}\circ\phi^{-1}_{b}\colon\mathbb{R}^{3} &\to &\mathbb{R}^{3}\nonumber\, , \\
x &\mapsto & \lambda^{-1}_{ab} x\, ,
\end{eqnarray}

\noindent
which is of course a diffeomorphism. It can be easily seen that

\begin{equation}
\lambda_{ba} = \lambda^{-1}_{ab}\, , \qquad \lambda_{ab}\lambda_{bc}\lambda_{ca} = 1\, ,
\end{equation} 

\noindent
where the second equation holds in $U_{a}\cap U_{b}\cap U_{c} \neq\emptyset$. Therefore, the following data

\begin{equation}
\left\{M_{8}, U_{a}, \lambda_{ab}, \mathbb{R}\right\}\, ,
\end{equation}

\noindent
defines a flat line bundle $L\to M_{8}$ over $M_{8}$ with connection that descends to a well defined closed one form $\varphi$ in $M_{8}$, namely $[\varphi]\in H^{1}(M_{8})$. Using $L$ we can write the families $\left\{ \Delta_{a} \right\}_{a\in I}$ and $\left\{ \xi_{a}\right\}_{a\in I}$ as 

\begin{equation}
\Delta\in C^{\infty}(M,L)\simeq\Gamma(L)\, , \qquad \xi\in \Omega^{1}(M,L^{3})\, ,
\end{equation}

\noindent
or in other words, in terms of a section of the line bundle $L$ and a one-form taking values in $L^{3}$. 

%%%%%%%%%%%%%%%%%%%%%%%%%%%%%%%%%%%%%%%%%%%%%%%%%%%%%%%%%%%%%%%%%%%%%%%%%%%%%%%%%%%%%%%%%%%%%%%%%%%%%%%%%%%%%%%%%%%%%%%%%%%%
%%%%%%%%%%%%%%%%%%%%%%%%%%%%%%%%%%%%%%%%%%%%%%%%%%%%%%%%%%%%%%%%%%%%%%%%%%%%%%%%%%%%%%%%%%%%%%%%%%%%%%%%%%%%%%%%%%%%%%%%%%%%

\subsection{The global geometry of $M_{8}$}
\label{sec:globalM8}

%%%%%%%%%%%%%%%%%%%%%%%%%%%%%%%%%%%%%%%%%%%%%%%%%%%%%%%%%%%%%%%%%%%%%%%%%%%%%%%%%%%%%%%%%%%%%%%%%%%%%%%%%%%%%%%%%%%%%%%%%%%%
%%%%%%%%%%%%%%%%%%%%%%%%%%%%%%%%%%%%%%%%%%%%%%%%%%%%%%%%%%%%%%%%%%%%%%%%%%%%%%%%%%%%%%%%%%%%%%%%%%%%%%%%%%%%%%%%%%%%%%%%%%%%

As it happened in section \ref{sec:Mtheory3d}, $M_{8}$ is equipped with a globally defined almost complex structure $J$, a real non-degenerate (1,1)-form $\omega = g\cdot J$ and a (4,0)-form $\Omega$, where $J$ and $\Omega$ are constructed as a bilinears from $\eta$. The quadruplet

\begin{equation}
\label{eq:gJomegaOmegaglobal}
\left\{g, J, \omega, \Omega\right\}
\end{equation}

\noindent
makes $M_{8}$ into an almost hermitean manifold with topologically trivial canonical bundle.

The crucial difference from the situation that we encountered in section \ref{sec:Mtheory3d} is that the conformal  transformation \eqref{eq:conformalglobal} cannot be performed globally. Therefore, we cannot perform the conformal transformation that \emph{transforms} the quadruplet $\left\{g, J, \omega, \Omega\right\}$ into a Calabi-Yau structure in  $M_{8}$, which thus cannot be taken to be a Calabi-Yau four-fold; in particular, the supersymmetry complex spinor is not parallel respect to any Levi-Civita connection associated to a metric in the conformal class of the physical metric. We can however perform the conformal transformation locally on ever open set $U_{a}$, and thus we define

\begin{equation}
\label{eq:conformallocal}
\tilde{g}_{a} = \Delta_{a}g|_{U_{a}}\, , \qquad \tilde{\eta}_{a} = \Delta^{-\frac{1}{2}}_{a}\eta|_{U_{a}}\, ,
\end{equation}

\noindent
where now $\tilde{g}_{a}$ and $\tilde{\eta}_{a}$ are locally defined on $U_{a}$. The local conformal transformation \eqref{eq:conformallocal} implies, again locally in $U_{a}$, that

\begin{equation}
\label{eq:conformallocalII}
\tilde{J}_{a} =  J|_{U_{a}} \, , \qquad \tilde{\omega}_{a} = \Delta_{a} \omega|_{U_{a}}\, , \qquad \tilde{\Omega}_{a} = \Delta^{2}_{a} \Omega|_{U_{a}}\, .
\end{equation}

\noindent
Notice that $J$ is invariant and thus its conformal transformed is a well defined tensor on $M_{8}$. An alternative characterization of these locally defined objects is through globally defined tensors taking values on the corresponding powers of the flat line bundle $L$, namely

\begin{equation}
\tilde{g}\in\Gamma(S^{2}T^{\ast},L)\, , \quad\tilde{\eta}\in\Gamma(S^{\mathbb{C}}_{8},L^{\frac{1}{2}})\, , \quad\tilde{\omega}\in\Omega^{2}(M_{8},L)\, , \quad\tilde{\Omega} \in \Omega^{4}(M_{8},L^{2})\, .
\end{equation} 

\noindent
Once we go to the locally transformed spinor and metric, we have that

\begin{equation}
\label{eq:constanttildeetalocal}
\tilde{\nabla}^{a}\tilde{\eta}_{a} = 0\, ,
\end{equation}

\noindent
where $\tilde{\nabla}^{a}$ is the Levi-Civita connection associated to $\tilde{g}_{a}$ in $U_{a}$. Equation \eqref{eq:constanttildeetalocal} automatically implies, again locally, in $U_{a}$, that

\begin{equation}
\label{eq:localJomegaOmega}
\tilde{\nabla}^{a} \tilde{J}_{a} = 0\, , \qquad \tilde{\nabla}^{a} \tilde{\omega}_{a} = 0\, , \qquad \tilde{\nabla}^{a}\tilde{\Omega}_{a} = 0\, . 
\end{equation} 

\noindent
Hence, we can think of  $\left\{ \tilde{g},\tilde{\omega}_{a},\tilde{J}_{a},\tilde{\Omega}_{a}\right\}$, as a sort of preferred local Calabi-Yau structure in $U_{a}$, which however does not extend globally to $M_{8}$. We can withal obtain globally defined differential conditions in $M_{8}$ which, as we will see later, implies that the geometry of $M_{8}$ belongs to a particular class of locally conformally K\"ahler manifolds. Notice that $\tilde{J}$ is a well-defined almost-complex structure; nonetheless it is not parallel since the Levi-Civita connection in \eqref{eq:localJomegaOmega} is only defined locally in $U_{a}$, as $\tilde{g}_{a}$ is only locally defined in $U_{a}$. In spite of this, we can prove the following:

\begin{prop}
$M_{8}$ is an Hermitian manifold with Hermitian structure $(g,J)$.
\end{prop}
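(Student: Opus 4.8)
\emph{Proof proposal.} The plan is to show that the globally defined almost complex structure $J$ is integrable; combined with the $J$-invariance of $g$ (which is already built into the almost Hermitian quadruplet \eqref{eq:gJomegaOmegaglobal}, since $\omega = g\cdot J$ is a non-degenerate $(1,1)$-form), this equips $M_{8}$ with a genuine Hermitian structure $(g,J)$. So the only thing that actually requires an argument is the vanishing of the Nijenhuis tensor $N_{J}$, after which the Newlander--Nirenberg theorem gives the claim.

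First I would recall that $N_{J}$ is a globally defined $(1,2)$-tensor field on $M_{8}$, antisymmetric in its two vector-field arguments; in particular $N_{J}\equiv 0$ on $M_{8}$ if and only if $N_{J}|_{U_{a}}=0$ for every member $U_{a}$ of the good open covering. This reduces the statement to a purely local check on each $U_{a}$.

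On a fixed $U_{a}$ I would invoke the local conformal transformation \eqref{eq:conformallocal}: by \eqref{eq:localJomegaOmega} the rescaled almost complex structure $\tilde{J}_{a}$ is parallel with respect to $\tilde{\nabla}^{a}$, the Levi-Civita connection of $\tilde{g}_{a}$, which is torsion-free. For any torsion-free connection the Nijenhuis tensor of an almost complex structure can be written algebraically in terms of the covariant derivative of that structure, so $\tilde{\nabla}^{a}\tilde{J}_{a}=0$ forces $N_{\tilde{J}_{a}}=0$ on $U_{a}$. Since the conformal transformation leaves the almost complex structure untouched, $\tilde{J}_{a}=J|_{U_{a}}$, hence $N_{J}|_{U_{a}}=N_{\tilde{J}_{a}}=0$. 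As $a$ was arbitrary and the $U_{a}$ cover $M_{8}$, we conclude $N_{J}\equiv 0$, so $J$ is an integrable complex structure; together with $g(J\cdot,J\cdot)=g(\cdot,\cdot)$ this makes $(g,J)$ a Hermitian structure and $M_{8}$ a Hermitian manifold.

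The point to be careful about — though it is not really an obstacle — is that the metrics $\tilde{g}_{a}$, and hence the connections $\tilde{\nabla}^{a}$, do not patch into a global object on $M_{8}$; what rescues the argument is that integrability is a local, connection-independent property and that $\tilde{J}_{a}$ coincides with the global $J$ on $U_{a}$, so the locally available parallelism suffices to kill the globally defined tensor $N_{J}$ patch by patch. One could equivalently run the computation directly with $g$ and its Levi-Civita connection, expressing $N_{J}$ in terms of $\nabla J$ and $\xi_{a}$; this yields the same conclusion and is in fact the natural point of departure for extracting, in the next section, the global differential identities that pin down $M_{8}$ as a locally conformally K\"ahler manifold.
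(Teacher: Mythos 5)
Your argument is correct and coincides with the paper's own proof: both express the Nijenhuis tensor on each patch $U_{a}$ via the locally defined Levi-Civita connection $\tilde{\nabla}^{a}$, use $\tilde{\nabla}^{a}\tilde{J}_{a}=0$ together with $\tilde{J}_{a}=J|_{U_{a}}$ to conclude $N|_{U_{a}}=0$, and then patch over the cover and invoke compatibility of $g$ with $J$. The only cosmetic difference is that you name Newlander--Nirenberg explicitly, which the paper leaves implicit.
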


\begin{proof}
Let $N$ denote the Nijenhuis tensor associated to $J$. Then, on every open set $U_{a}\subset M_{8}$ we can locally write $N$ as follows

\begin{equation}
N|_{U_{a}}(u,v) = (\tilde{\nabla}^{a}_{u}J)(J v ) - (\tilde{\nabla}^{a}_{v} J)(J u) + (\tilde{\nabla}^{a}_{Ju}J)(v) - (\tilde{\nabla}^{a}_{Jv}J) (u)\, ,\qquad u,v\in\mathfrak{X}(M_{8})\, ,
\end{equation}

\noindent
and thus $N|_{U_{a}} = 0$ since $J$ is parallel respect to the locally defined Levi-Civita connection $\tilde{\nabla}_{a}$. Since this can be performed in every open set of the covering $\left\{ U_{a}\right\}_{a\in I}$ of $M$, we conclude that $N=0$ and hence $J$ is a complex structure. Since the metric $g$ is compatible with $J$, $(M_{8},g,J)$ is an Hermitian manifold.
\end{proof}

\noindent
Hence, we conclude that $M_{8}$ is a complex Hermitian manifold. There is another global condition that we can extract from \eqref{eq:localJomegaOmega} and which will further restrict the global geometry of $M_{8}$. Equation \eqref{eq:localJomegaOmega} implies that on every open set $U_{a}$ we can find a function, namely $\Delta_{a}$, such that

\begin{equation}
\label{eq:localomega}
d(\Delta_{a} \omega)|_{U_{a}} = 0\, .
\end{equation}

\noindent 
The key point now is that the de-Rahm differential does not depend on the locally-defined Levi-Civita connection $\tilde{\nabla}^{a}$ and therefore we can actually extend equation \ref{eq:localomega} to an equivalent, well-defined, global condition in $M_{8}$. Equation \eqref{eq:localomega} is equivalent to 

\begin{equation}
d\omega|_{U_{a}} + d\log\,\Delta_{a}\wedge\omega|_{U_{a}} = 0\, .
\end{equation}

\noindent
Given another open set $U_{b}$ such that $U_{a}\cap U_{b}\neq \emptyset$ we have that $\log\Delta_{a} = \log\Delta_{b} +\log\lambda_{ab}$ at the intersection and therefore $d\log\Delta_{a} = d\log\Delta_{b}$. Hence, there is a well-defined closed one-form $\varphi\in\Omega^{1}(M_{8})$ such that

\begin{equation}
d\omega = \varphi\wedge\omega\, ,
\end{equation}

\noindent
which is defined, in every open set $U_{a}$, as

\begin{equation}
\varphi|_{U_{a}} = d\log \Delta_{a}\, .
\end{equation}

\noindent
Therefore $\omega$ is a locally conformal symplectic structure \cite{VaismanSymplectic} on $M_{8}$ and thus we have proven the following result:

\begin{thm}
\label{thm:l.c.K.f.}
Let $M_{8}$ be a compact, $SU(4)$-structure locally conformally K\"ahler manifold with locally conformally Ricci-flat K\"ahler metric and locally conformally parallel $(4,0)$-form. Then, $M_{8}$ is an admissible supersymmetric internal space for a supersymmetric compactification of eleven-dimensional Supergravity to three-dimensional Minkowski space-time preserving $\mathcal{N}=2$ supersymmetry. 
\end{thm}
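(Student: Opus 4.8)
The plan is to run the local analysis of section~\ref{sec:Mtheory3d} in reverse on each member of a good cover and then glue the pieces into a genuine eleven-dimensional background. Let $\{U_{a}\}_{a\in I}$ be a good open covering of $M_{8}$ adapted to the locally conformally K\"ahler structure, with Lee form $\varphi$ satisfying $d\omega=\varphi\wedge\omega$. Since each $U_{a}$ is contractible, $\varphi|_{U_{a}}$ is exact, so I write $\varphi|_{U_{a}}=d\log\Delta_{a}$ with $\Delta_{a}\in C^{\infty}(U_{a})$, $\Delta_{a}>0$, and set $\lambda_{ab}:=\Delta_{a}/\Delta_{b}$ on overlaps. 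Because $\varphi$ is a single globally defined closed one-form, $d(\log\Delta_{a}-\log\Delta_{b})=\varphi-\varphi=0$ on $U_{a}\cap U_{b}$, so each $\lambda_{ab}$ is locally constant, and $\lambda_{ba}=\lambda_{ab}^{-1}$, $\lambda_{ab}\lambda_{bc}\lambda_{ca}=1$ hold; these are precisely the transition functions of the flat line bundle $L\to M_{8}$ of section~\ref{sec:globallocalsusy}, the family $\{\Delta_{a}\}$ defines a section $\Delta\in\Gamma(L)$, and $\{\xi_{a}:=d(\Delta_{a}^{3})\}$ a one-form $\xi\in\Omega^{1}(M_{8},L^{3})$.

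Next I would produce the local supergravity data. An $SU(4)$-structure on $M_{8}$ is in particular a nowhere-vanishing complex spinor $\eta\in\Gamma(S^{\mathbb{C}}_{8})$ inducing the global tensors $(g,J,\omega,\Omega)$, so $\eta$ is global from the start. On $U_{a}$ set $\tilde{g}_{a}:=\Delta_{a}\,g|_{U_{a}}$, $\tilde{\omega}_{a}:=\Delta_{a}\,\omega|_{U_{a}}$, $\tilde{\Omega}_{a}:=\Delta_{a}^{2}\,\Omega|_{U_{a}}$ and $\tilde{\eta}_{a}:=\Delta_{a}^{-1/2}\eta|_{U_{a}}$. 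By hypothesis $(\tilde{g}_{a},J,\tilde{\omega}_{a},\tilde{\Omega}_{a})$ is a Ricci-flat K\"ahler structure with $\tilde{g}_{a}$-parallel $(4,0)$-form, i.e.\ a torsion-free $SU(4)$-structure on $U_{a}$; since the conformal rescaling \eqref{eq:conformallocal}--\eqref{eq:conformallocalII} carries $\eta|_{U_{a}}$ precisely to the parallel spinor realizing this structure, one gets $\tilde{\nabla}^{a}\tilde{\eta}_{a}=0$. Now put $\epsilon:=\chi\otimes\eta$ with $\chi$ covariantly constant on $\mathbb{R}_{1,2}$, and on $V_{a}\times U_{a}\subset M$ define $\mathsf{g}|_{U_{a}}=\Delta_{a}^{2}\,\delta_{1,2}+g|_{U_{a}}$, $\mathsf{G}|_{U_{a}}=\mathrm{Vol}\wedge\xi_{a}+G$, where $G\in\Omega^{4}(M_{8})$ is any closed form that is $(2,2)$ and primitive for $(J,\omega)$ (e.g.\ $G=0$, which solves $\iota_{v}G\cdot\eta=0$ trivially). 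On the patch $V_{a}\times U_{a}$ the data $(\tilde{g}_{a},\tilde{\eta}_{a},G,\xi_{a}=d\Delta_{a}^{3})$ is exactly a local Becker--Becker Calabi--Yau configuration, so the computation recalled in section~\ref{sec:Mtheory3d} yields $\delta_{\epsilon}\Psi=D\epsilon=0$ on $V_{a}\times U_{a}$; and the $Spin(7)$-obstruction \eqref{eq:Spin7obstruction} holds automatically because an $SU(4)$-structure is assumed from the outset.

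It remains to glue. On $U_{a}\cap U_{b}$ one has $\Delta_{a}=\lambda_{ab}\Delta_{b}$ with $\lambda_{ab}$ \emph{constant}, hence $\tilde{g}_{a}=\lambda_{ab}\tilde{g}_{b}$ and $\tilde{\nabla}^{a}=\tilde{\nabla}^{b}$, so that $g=\Delta_{a}^{-1}\tilde{g}_{a}$, together with $\omega$, $J$, $\Omega$ and $\eta$, is globally defined, while $\Delta_{a}^{2}\delta_{1,2}=\Delta_{b}^{2}\delta_{1,2}$ and $\mathrm{Vol}\wedge\xi_{a}=\mathrm{Vol}\wedge\xi_{b}$ hold after the rescaling $x\mapsto\lambda_{ab}^{-1}x$ of the $\mathbb{R}_{1,2}$-coordinates, exactly as in \eqref{eq:patichinggGII}. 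With the twisted atlas $\mathcal{A}$ of section~\ref{sec:globallocalsusy} this says precisely that $\mathsf{g}$ and $\mathsf{G}$ are well-defined tensors on $M=\mathbb{R}_{1,2}\times M_{8}$ and $\epsilon$ a well-defined spinor, up to diffeomorphisms of $M$ that act as symmetries of the eleven-dimensional equations of motion. Since $\delta_{\epsilon}\Psi=D\epsilon=0$ is a pointwise condition which holds on each member of a cover of $M$, it holds on all of $M$; hence $(M,\mathsf{g},\mathsf{G},\epsilon)$ is a supersymmetric background and $M_{8}$ is an admissible internal space.

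I expect the real work to lie not in the differential geometry — the vanishing of $D\epsilon$ patchwise and the tensoriality of $(\mathsf{g},\mathsf{G})$ are a mechanical rerun of section~\ref{sec:Mtheory3d} once $\lambda_{ab}$ is known to be constant — but in the careful global bookkeeping of the flat line bundle and its fractional powers: one must check that $\tilde{\eta}$, $\tilde{g}$, $\tilde{\omega}$, $\tilde{\Omega}$ descend to sections of the advertised twists $S^{\mathbb{C}}_{8}\otimes L^{1/2}$, $S^{2}T^{\ast}\otimes L$, $\Omega^{2}(M_{8})\otimes L$, $\Omega^{4}(M_{8})\otimes L^{2}$ consistently and compatibly with the chosen spin structure (the square root $L^{1/2}$ exists because the $\lambda_{ab}$ are positive). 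A secondary caveat worth stating is that Theorem~\ref{thm:l.c.K.f.} concerns the supersymmetry conditions $D\epsilon=0$ only; imposing in addition the full eleven-dimensional equations of motion — in particular the warp-factor/Einstein equation — forces $G\neq 0$ (and/or M2-brane sources) whenever $[\varphi]\neq 0$, and hence brings in flux quantization and tadpole cancellation, which lie beyond what the statement asserts.
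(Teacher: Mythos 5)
Your proposal is correct and follows essentially the same route as the paper: the same good cover, the local potentials $\Delta_{a}$ for the Lee form with constant transition factors $\lambda_{ab}$ defining the flat line bundle $L$, the local conformal rescaling to a parallel-spinor (Becker--Becker) configuration on each $U_{a}$, the $(2,2)$ primitive choice of $G$, and the gluing via the rescaled $\mathbb{R}_{1,2}$-coordinates of the twisted atlas. The only difference is one of presentation: you run the argument constructively from the locally conformally Calabi--Yau data back to the eleven-dimensional background, which matches the literal statement of the theorem, whereas the paper presents the forward derivation and treats this converse as immediate; your closing remarks on the $L^{1/2}$ twist of the spinor and on the distinction between the supersymmetry conditions and the full equations of motion are consistent with how the paper handles these points.
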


\noindent
The closed one-form $[\varphi]\in H^{1}(M_{8})$, which is usually called the Lee-form, is precisely a flat connection in $L\to M_{8}$. Alternatively, one can define the $\varphi$-\emph{twisted} differential $d_{\varphi} = d-\varphi$ whose corresponding cohomology $H^{\ast}_{\varphi}(M_{8})$ is isomorphic to $H^{\ast}(M_{8},\mathcal{F}_{\varphi})$, the cohomology of $M_{8}$ with values in the sheaf of local $d_{\varphi}$-closed functions. Very good references to learn about locally conformally K\"ahler geometry are the book \cite{lcKbook} and the review \cite{2010arXiv1002.3473O}. 

%%%%%%%%%%%%%%%%%%%%%%%%%%%%%%%%%%%%%%%%%%%%%%%%%%%%%%%%%%%%%%%%%%%%%%%%%%%%%%%%%%%%%%%%%%%%%%%%%%%%%%%%%%%%%%%%%%%%%%%%%%%%
%%%%%%%%%%%%%%%%%%%%%%%%%%%%%%%%%%%%%%%%%%%%%%%%%%%%%%%%%%%%%%%%%%%%%%%%%%%%%%%%%%%%%%%%%%%%%%%%%%%%%%%%%%%%%%%%%%%%%%%%%%%%

\subsection{Solving the $G$-form flux}
\label{sec:solvingG}

%%%%%%%%%%%%%%%%%%%%%%%%%%%%%%%%%%%%%%%%%%%%%%%%%%%%%%%%%%%%%%%%%%%%%%%%%%%%%%%%%%%%%%%%%%%%%%%%%%%%%%%%%%%%%%%%%%%%%%%%%%%%
%%%%%%%%%%%%%%%%%%%%%%%%%%%%%%%%%%%%%%%%%%%%%%%%%%%%%%%%%%%%%%%%%%%%%%%%%%%%%%%%%%%%%%%%%%%%%%%%%%%%%%%%%%%%%%%%%%%%%%%%%%%%

In order to fully satisfy supersymmetry, we have to impose on the four-form $G$ the constraint

\begin{equation}
\label{eq:GconditionII}
\iota_{v} G\cdot \eta = 0\, , \qquad v\in\mathfrak{X}(M_{8})\, .
\end{equation}

\noindent
In the Calabi-Yau case, this constraint was solved by taking $G$ to be $(2,2)$ and primitive. In our case $M_{8}$ is not a Calabi-Yau manifold but it is a Hermitian manifold and hence it is equipped with a complex structure $J$ and a compatible metric $g$. This turns out to be enough, as we will see now, to conclude that indeed the same conditions, namely $G$ to be $(2,2)$ and primitive, solve equation \eqref{eq:GconditionII} in our case. 

First of all, since we will use this fact later, notice that taking into account that $\eta$ has positive chirality then equation \eqref{eq:GconditionII} implies that $G$ is self-dual in $M_{8}$. Using the Clifford algebra $Cl(8,\mathbb{R})$ relations together with the expresion of $g$ as bilinear of $\eta$, it can be shown that \cite{Becker:1996gj}:

\begin{equation}
\Gamma_{\bar{a}}\eta = \Gamma^{a}\eta = 0\, ,
\end{equation} 

\noindent 
where $\left\{ \Gamma_{a}\right\},\, a=1,\hdots 8$ are the gamma matrices generating $Cl(8,\mathbb{R})$ and the bar denotes an antiholomorphic index. Then, equation \eqref{eq:GconditionII} is equivalent to

\begin{equation}
\label{eq:GconditionIII}
G_{m\bar{a}\bar{b}\bar{c}}\Gamma^{\bar{a}\bar{b}\bar{c}}\eta + 3 G_{m\bar{a}\bar{b}\bar{c}}\Gamma^{\bar{a}\bar{b}c}\eta = 0\, .
\end{equation}

\noindent
The vanishing of the first term in equation \eqref{eq:GconditionIII} is equivalent to

\begin{equation}
G_{m\bar{a}\bar{b}\bar{c}}\eta = 0\, ,
\end{equation}

\noindent
which implies

\begin{equation}
G^{4,0} = G^{3,1} = G^{1,3} = G^{0,4} = 0\, .
\end{equation}

\noindent
The vanishing of the second term in equation \eqref{eq:GconditionIII} is equivalent to

\begin{equation}
\label{eq:Gprimitive}
G_{a\bar{b}c\bar{d}}g^{c\bar{d}} = 0\, .
\end{equation}

\noindent
Taking now into account that $G$ is self-dual, we can rewrite equation \eqref{eq:Gprimitive} as 

\begin{equation}
G\wedge J = 0\, ,
\end{equation}

\noindent
and hence we finally conclude that $G$ is primitive and $G\in H^{(2,2)}(M_{8})$.

%%%%%%%%%%%%%%%%%%%%%%%%%%%%%%%%%%%%%%%%%%%%%%%%%%%%%%%%%%%%%%%%%%%%%%%%%%%%%%%%%%%%%%%%%%%%%%%%%%%%%%%%%%%%%%%%%%%%%%%%%%%%
%%%%%%%%%%%%%%%%%%%%%%%%%%%%%%%%%%%%%%%%%%%%%%%%%%%%%%%%%%%%%%%%%%%%%%%%%%%%%%%%%%%%%%%%%%%%%%%%%%%%%%%%%%%%%%%%%%%%%%%%%%%%

\subsection{The tadpole-cancellation condition}
\label{sec:tadpolecancellation}

%%%%%%%%%%%%%%%%%%%%%%%%%%%%%%%%%%%%%%%%%%%%%%%%%%%%%%%%%%%%%%%%%%%%%%%%%%%%%%%%%%%%%%%%%%%%%%%%%%%%%%%%%%%%%%%%%%%%%%%%%%%%
%%%%%%%%%%%%%%%%%%%%%%%%%%%%%%%%%%%%%%%%%%%%%%%%%%%%%%%%%%%%%%%%%%%%%%%%%%%%%%%%%%%%%%%%%%%%%%%%%%%%%%%%%%%%%%%%%%%%%%%%%%%%

In order to allow for a non-zero $G$-flux in $M_{8}$, we have to consider $\alpha^{\prime}$-corrections to eleven-dimensional Supergravity, due to the well-known no-go theorem of reference \cite{Maldacena:2000mw}. We will perform the calculation in this section in order to illustrate that although $\left\{\xi_{a}\right\}_{a\in I}$ is not a well-defined one-form in $M_{8}$, due to the fact that $\mathsf{G}$ is an \emph{honest} tensor in $M$, the calculation can be carried out, and since $M_{8}$ is topologically $Spin(7)$, we obtain the same result as in the standard case. The relevant correction for our purposes is given by \cite{Duff:1995wd}

\begin{equation}
\delta S = -T_{M2}\int_{M} C_{3}\wedge X_{8}\, ,
\end{equation}

\noindent
where $G_{4} = dC_{3}$ and $X_{8}$ is an eight-form given by

\begin{equation}
X_{8} = \frac{1}{(2\pi)^{4}}\left( \frac{1}{192} \mathrm{tr}\, R^{4} - \frac{1}{768} \left(\mathrm{tr}\, R^2\right)^{2}\right)\, .
\end{equation}

\noindent
The corrected equation of motion for the four-form $\mathsf{G}$ adapted to the compactification background and written on $M_{8}$ reads

\begin{equation}
\label{eq:adaptedGeq}
\frac{3}{2}d\ast\varphi= -\frac{1}{2} G\wedge G +\beta X_{8}\, ,
\end{equation}

\noindent
where $X_{8}$ can be rewritten in terms of the first and second Pontryagin forms of the internal manifold \cite{AlvarezGaume:1983ig}

\begin{equation}
X_{8} = \frac{1}{192}\left( P^{2}_{1} - 4 P_{2}\right)\, ,
\end{equation}

\noindent
and $\beta$ is an appropriate constant that we will not need explicitly. Notice that $\varphi$ is a one-form locally given by the derivative of the corresponding local warp factor but it cannot be written globally as the derivative of a function, yet it is a well defined closed one-form in $M_{8}$. Asuming that $M_{8}$ is closed, we integrate equation \eqref{eq:adaptedGeq} to obtain

\begin{equation}
\frac{1}{2\beta}\int_{M_{8}} G\wedge G = \int_{M_{8}} X_{8}\, .
\end{equation}

\noindent
Using now that $M_{8}$ has a $SU(4)$-structure and in particular it satisfies equation \eqref{eq:Spin7obstruction}, we obtain

\begin{equation}
\label{eq:tadpole}
\frac{1}{2\beta}\int_{M_{8}} G\wedge G = \frac{\chi(M_{8})}{24} \, ,
\end{equation}

\noindent
a result that was to be expected since it only depends on $M_{8}$ being equipped with a $Spin(7)$-structure. 

%%%%%%%%%%%%%%%%%%%%%%%%%%%%%%%%%%%%%%%%%%%%%%%%%%%%%%%%%%%%%%%%%%%%%%%%%%%%%%%%%%%%%%%%%%%%%%%%%%%%%%%%%%%%%%%%%%%%%%%%%%%%
%%%%%%%%%%%%%%%%%%%%%%%%%%%%%%%%%%%%%%%%%%%%%%%%%%%%%%%%%%%%%%%%%%%%%%%%%%%%%%%%%%%%%%%%%%%%%%%%%%%%%%%%%%%%%%%%%%%%%%%%%%%%
%%%%%%%%%%%%%%%%%%%%%%%%%%%%%%%%%%%%%%%%%%%%%%%%%%%%%%%%%%%%%%%%%%%%%%%%%%%%%%%%%%%%%%%%%%%%%%%%%%%%%%%%%%%%%%%%%%%%%%%%%%%%
%%%%%%%%%%%%%%%%%%%%%%%%%%%%%%%%%%%%%%%%%%%%%%%%%%%%%%%%%%%%%%%%%%%%%%%%%%%%%%%%%%%%%%%%%%%%%%%%%%%%%%%%%%%%%%%%%%%%%%%%%%%%

\section{Locally conformally K\"ahler manifolds}
\label{sec:l.c.K.}

%%%%%%%%%%%%%%%%%%%%%%%%%%%%%%%%%%%%%%%%%%%%%%%%%%%%%%%%%%%%%%%%%%%%%%%%%%%%%%%%%%%%%%%%%%%%%%%%%%%%%%%%%%%%%%%%%%%%%%%%%%%%
%%%%%%%%%%%%%%%%%%%%%%%%%%%%%%%%%%%%%%%%%%%%%%%%%%%%%%%%%%%%%%%%%%%%%%%%%%%%%%%%%%%%%%%%%%%%%%%%%%%%%%%%%%%%%%%%%%%%%%%%%%%%
%%%%%%%%%%%%%%%%%%%%%%%%%%%%%%%%%%%%%%%%%%%%%%%%%%%%%%%%%%%%%%%%%%%%%%%%%%%%%%%%%%%%%%%%%%%%%%%%%%%%%%%%%%%%%%%%%%%%%%%%%%%%
%%%%%%%%%%%%%%%%%%%%%%%%%%%%%%%%%%%%%%%%%%%%%%%%%%%%%%%%%%%%%%%%%%%%%%%%%%%%%%%%%%%%%%%%%%%%%%%%%%%%%%%%%%%%%%%%%%%%%%%%%%%%

We have obtained that the supersymmetric conditions on an eleven-dimensional Supergravity compactification to three-dimensional Minkowski space-time, locally preserving $\mathcal{N}=2$ Supersymmetry, allow for locally Ricci-flat, $SU(4)$-structure, locally-conformal K\"ahler manifolds as internal spaces. It is first convenient to introduce the following definition:

\begin{definition}
A $n$-complex dimensional locally conformal Calabi-Yau manifold is $SU(n)$-structure locally-conformal K\"ahler manifold with locally Ricci-flat Hermitian metric and locally conformally parallel $(n,0)$-form.
\end{definition}

\noindent
Hence, the kind of $SU(4)$-structure locally-conformal K\"ahler manifolds that we have obtained as admissible M-theory compactification backgrounds are precisely locally conformal Calabi-Yau manifolds, which motivates the definition. These are not necessarily Calabi-Yau four-folds (which would be a special subclass) and thus it is worth characterizing their geometry. First of all let us summarize the main properties of a generic compact locally conformal Calabi-Yau manifolds:

\begin{enumerate}

\item $M$ is a compact Hermitian manifold. In other words, it is a complex manifold with a Riemannian metric $g$ compatible with the complex structure $J$ of the manifold.

\item $M$ is equipped with non-degenerate two-form $\omega$ constructed from$J$ and $g$, which is not closed but satisfies

\begin{equation}
d\omega = \varphi\wedge\omega\, .
\end{equation}

\noindent
Then $M$ is a particular case of almost-K\"ahler manifold.

\item Although $\omega$ is not closed, locally one can always transform it such that the locally transformed two-form is closed. Therefore $M$ is a particular case of locally conformally symplectic manifold \cite{VaismanSymplectic}.

\item The Riemannian metric $g$ is not Ricci-flat. Despite of this, locally one can find a Ricci-flat metric locally conformal to $g$.

\item There is a globally defined complex spinor which is not parallel respect to the Levi-Civita connection associated to $g$. However, we can make a conformal transformation on the spinor such that it becomes locally parallel respect to the Levi-Civita connection associated to the locally transformed metric.

\item $M$ is equipped with a $SU(n)$-structure, or in other words, it has zero first Chern class in $\mathbb{Z}$. However, the canonical bundle is not holomorphically trivial, as the $(n,0)$-form that topologically trivializes it is not holomorphic, but only locally conformally holomorphic. 

\item $M$ is not projective, in contrast to the Calabi-Yau case. This seemingly technical detail is important, since for example, algebraic-geometry tools are very much used in order to study F-theory on elliptically-fibered Calabi-Yau four-folds.
 
\end{enumerate}

\noindent
There are in the literature several definitions of Calabi-Yau manifolds, not always equivalent. For definiteness, and in order to compare compact Calabi-Yau manifolds with compact locally conformal Calabi-Yau manifolds,  we will use the following two equivalent definitions

\begin{itemize}

\item A compact Calabi-Yau manifold is a compact manifold of real dimension $2n$ equipped with a metric of holonomy equal or contained in $SU(n)$. 

\item A compact Calabi-Yau manifold is a compact K\"ahler manifold with holomorphically trivial canonical bundle. 

\end{itemize}

\noindent
From the previous definitions we see that a locally conformal Calabi-Yau manifold fails to be Calabi-Yau by only two conditions, namely they are not K\"ahler and they do not have an holomorphic $(n,0)$-form, although they are equipped with a $(n,0)$-form topologically trivializing the canonical bundle. The deviation from Calabi-Yau can be measured by $\varphi$, namely, $M$ is Calabi-Yau if and only if $[\varphi]$ is the zero class in de Rahm cohomology. Hence, we have obtained the following result:

\begin{cor}
A simply-connected locally conformal Calabi-Yau manifolds is a Calabi-Yau manifold.
\end{cor}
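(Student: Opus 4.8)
The plan is to show that simple-connectedness forces the Lee form $\varphi$ appearing in Theorem~\ref{thm:l.c.K.f.} to be exact, and then to run the conformal rescaling \eqref{eq:conformalglobal}--\eqref{eq:conformallocal} \emph{globally}, which puts us back exactly in the Calabi-Yau situation of Section~\ref{sec:Mtheory3d}. Since, as recalled just before the statement, the whole deviation from being Calabi-Yau is encoded in the de Rham class $[\varphi]\in H^{1}(M)$, it suffices to prove $[\varphi]=0$ and then to make explicit why the vanishing of $[\varphi]$ produces a genuine Calabi-Yau structure.

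For the cohomological input: if $M$ is simply connected then $\pi_{1}(M)=0$, so by the Hurewicz theorem $H_{1}(M;\mathbb{Z})=0$, and by the universal coefficient theorem together with de Rham's theorem $H^{1}_{\mathrm{dR}}(M)\cong\mathrm{Hom}(H_{1}(M;\mathbb{Z}),\mathbb{R})=0$. The Lee form $\varphi$ is closed (established in the paragraph preceding Theorem~\ref{thm:l.c.K.f.}), hence $[\varphi]=0$ and $\varphi=d\psi$ for some globally defined $\psi\in C^{\infty}(M)$.

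Next I would globalize the conformal transformation. On each member $U_{a}$ of the good cover one has $\varphi|_{U_{a}}=d\log\Delta_{a}$, so $d(\log\Delta_{a}-\psi)=0$ on $U_{a}$; since $U_{a}$ is connected this gives $\Delta_{a}=e^{c_{a}}\,e^{\psi}$ for a constant $c_{a}\in\mathbb{R}$. In particular the transition constants $\lambda_{ab}=\Delta_{a}/\Delta_{b}=e^{c_{a}-c_{b}}$ form a coboundary, the flat line bundle $L\to M$ with flat connection $\varphi$ is trivial, and $\Delta:=e^{\psi}>0$ is an honest global warp factor representing the family $\{\Delta_{a}\}_{a\in I}$. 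Performing the global conformal transformation $\tilde g=\Delta g$, $\tilde\eta=\Delta^{-\frac{1}{2}}\eta$: on each $U_{a}$ it differs from the local transformation \eqref{eq:conformallocal} only by the positive constant $e^{c_{a}}$, and a constant rescaling of a metric does not change its Levi-Civita connection, so $\tilde\nabla|_{U_{a}}=\tilde\nabla^{a}$. Therefore \eqref{eq:constanttildeetalocal} and \eqref{eq:localJomegaOmega} upgrade to $\tilde\nabla\tilde\eta=0$ and $\tilde\nabla J=\tilde\nabla\tilde\omega=\tilde\nabla\tilde\Omega=0$ on all of $M$; equivalently, $\tilde\omega=\Delta\,\omega$ is closed (this is \eqref{eq:localomega}, which now holds on every $U_{a}$ because there $\Delta$ and $\Delta_{a}$ agree up to a constant, hence holds globally). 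Thus $\tilde g$ has holonomy contained in $SU(n)$; equivalently, $(M,\tilde g,\tilde\omega,J)$ is K\"ahler and $\tilde\Omega=\Delta^{2}\Omega$ is a $\tilde\nabla$-parallel, hence holomorphic, $(n,0)$-form trivializing the canonical bundle. By either of the two definitions of Calabi-Yau recalled before the Corollary, $M$ is Calabi-Yau.

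As for difficulty, there is no real obstacle: the statement is in essence a corollary of the dichotomy ``$M$ is Calabi-Yau if and only if $[\varphi]=0$'' combined with $H^{1}_{\mathrm{dR}}(M)=0$. The only point requiring (minor) care is that, once $\varphi$ is exact, the discrepancy between the candidate global warp factor $\Delta$ and the local ones $\Delta_{a}$ is a \emph{constant} rather than an arbitrary function --- which is exactly why it is invisible both to the Levi-Civita connection and to the de Rham differential, and hence does not obstruct globalizing the local Calabi-Yau structures $(\tilde g_{a},\tilde\eta_{a},\tilde\omega_{a},\tilde\Omega_{a})$. As a bonus, Yau's theorem then identifies $\tilde g$ with the unique Ricci-flat metric in its K\"ahler class, exactly as at the end of Section~\ref{sec:Mtheory3d}.
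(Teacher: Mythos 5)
Your proposal is correct and follows essentially the same route as the paper: simple connectedness kills $H^{1}_{\mathrm{dR}}(M)$, so the Lee form $\varphi$ is exact and the local conformal rescalings glue (up to harmless constants) into the global transformation \eqref{eq:conformalglobal}, reducing to the Calabi-Yau situation of Section~\ref{sec:Mtheory3d}. The only difference is that you spell out the globalization step that the paper leaves implicit in its remark that $M$ is Calabi-Yau if and only if $[\varphi]=0$.
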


\noindent
Contrary to what happens with compact locally irreducible Calabi-Yau manifolds, compact locally conformally Calabi-Yau manifolds can have continuous isometries. Let us consider the case of a generic locally conformally K\"ahler manifold $M$: it is equipped with two canonical vector fields $v$ and $u$ given by

\begin{equation}
g(w,u) = \varphi(Jw)\, ,\qquad g(w,v) = \varphi(Jw)\, , \qquad \forall w\in\mathfrak{X}(M)\, .
\end{equation}  

\noindent
Then, the following result holds \cite{VaismanI}:

\begin{prop}
\label{prop:canonicalkilling}
The canonical vector field $u$ is a Killing vector field on $M$ if and only if is an infinitesimal automorphism of $J$, and in this case on has $[u,v]=0$.
\end{prop}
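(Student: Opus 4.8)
The plan is to reduce everything to the fundamental locally conformally K\"ahler identity $d\omega = \varphi\wedge\omega$, Cartan's formula $\mathcal{L}_X = d\iota_X + \iota_X d$, and the algebraic relations among $u$, $v$, $J$, $g$. First I would record the elementary consequences of the definitions, read as $g(w,v)=\varphi(w)$ and $g(w,u)=\varphi(Jw)$, so that $v$ is the Lee vector field and $u$ the anti-Lee one. Since $J$ is $g$-orthogonal, $g(w,u)=\varphi(Jw)=g(Jw,v)=-g(w,Jv)$, hence $u=-Jv$; consequently $\iota_u\omega = \omega(u,\cdot) = g(Ju,\cdot) = g(v,\cdot) = \varphi$, and $\varphi(u) = g(u,v) = -g(Jv,v) = 0$ because $g(JX,X)=0$. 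I would also use that the Lee form $\varphi$ is closed, as established before the proposition.

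The key observation is that $\mathcal{L}_u\omega = 0$ holds unconditionally: by Cartan's formula, $\mathcal{L}_u\omega = d\iota_u\omega + \iota_u d\omega = d\varphi + \bigl(\varphi(u)\,\omega - \varphi\wedge\iota_u\omega\bigr) = 0 + \bigl(0 - \varphi\wedge\varphi\bigr) = 0$. Next I would expand $\mathcal{L}_u\omega$ using $\omega(X,Y) = g(JX,Y)$ and the derivation property of the Lie derivative, obtaining $(\mathcal{L}_u\omega)(X,Y) = (\mathcal{L}_u g)(JX,Y) + g\bigl((\mathcal{L}_u J)X, Y\bigr)$. Because the left-hand side vanishes and $g$ is nondegenerate with $J$ invertible, this forces the equivalence $\mathcal{L}_u g = 0 \iff \mathcal{L}_u J = 0$: if $\mathcal{L}_u g=0$ then $g\bigl((\mathcal{L}_u J)X,Y\bigr)=0$ for all $X,Y$, hence $\mathcal{L}_u J=0$; conversely if $\mathcal{L}_u J=0$ then $(\mathcal{L}_u g)(JX,Y)=0$ for all $X,Y$, and replacing $X$ by $-JX$ gives $\mathcal{L}_u g=0$. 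This is precisely the claimed equivalence between $u$ being Killing and $u$ being an infinitesimal automorphism of $J$.

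For the final assertion, assume $u$ is Killing, i.e. $\mathcal{L}_u g = 0$. Then $\mathcal{L}_u\varphi = d(\iota_u\varphi) + \iota_u d\varphi = d\bigl(\varphi(u)\bigr) = 0$. Since $v$ is obtained from $\varphi$ by raising an index with $g$ and $\mathcal{L}_u g = 0$, the Lie derivative along $u$ commutes with this musical isomorphism: $g(\mathcal{L}_u v,\cdot) = \mathcal{L}_u\bigl(g(v,\cdot)\bigr) - (\mathcal{L}_u g)(v,\cdot) = \mathcal{L}_u\varphi = 0$, so $\mathcal{L}_u v = 0$, that is $[u,v]=0$.

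I do not expect a genuine obstacle here; the only points requiring care are fixing the sign conventions relating $u$, $v$ and $J$, and spotting the computation — that $\iota_u\omega = \varphi$ together with $\varphi(u)=0$ makes $\mathcal{L}_u\omega$ vanish identically — which is what makes the whole argument short. Everything else is routine bookkeeping with the derivation property of $\mathcal{L}_u$.
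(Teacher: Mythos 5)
Your proof is correct. Note that the paper itself gives no argument here: it quotes the result from Vaisman's work, so there is nothing internal to compare against; your write-up is in fact essentially Vaisman's classical argument for the anti-Lee field. Two points worth highlighting. First, you rightly repaired the paper's defining equations, which as printed give $u$ and $v$ by the same formula; your reading $g(w,v)=\varphi(w)$, $g(w,u)=\varphi(Jw)$, i.e.\ $u=-Jv$, is the intended one, and the only residual ambiguity is a harmless overall sign depending on whether $\omega(X,Y)=g(JX,Y)$ or $g(X,JY)$. Second, your central computation is sound: $\iota_u\omega=\varphi$, $\varphi(u)=0$ and $d\varphi=0$ give $\mathcal{L}_u\omega=0$ unconditionally, the identity $(\mathcal{L}_u\omega)(X,Y)=(\mathcal{L}_u g)(JX,Y)+g\bigl((\mathcal{L}_u J)X,Y\bigr)$ together with nondegeneracy of $g$ and invertibility of $J$ then yields the equivalence $\mathcal{L}_u g=0\iff\mathcal{L}_u J=0$, and since $\mathcal{L}_u\varphi=0$ also holds unconditionally, compatibility of the Lie derivative with the musical isomorphism under $\mathcal{L}_u g=0$ gives $[u,v]=0$. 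No gap.
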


\noindent
Therefore we see that if $u$ is a Killing vector field, then $u$ and $v$ commute and thus they are the infinitesimal generators of a $\mathbb{R}\times\mathbb{R}$-action on $M$. This is a nice starting point to end-up having a torus action and therefore an elliptic fibration on $M$, as explained in proposition 6.4 of \cite{Grana:2014rpa}, where the necessary and sufficient conditions for $u$ and $v$ to define an elliptic fibration where obtained. 

Now that we know that locally conformal Calabi-Yau manifolds are not necessarily Calabi-Yau, an explicit example of a non-Calabi-Yau locally conformal Calabi-Yau manifold is in order. A general a locally conformally K\"ahler manifold $M$ can be written has follows \cite{VaismanI}:

\begin{equation}
M = \tilde{M}/G\, ,
\end{equation}

\noindent
where $\tilde{M}$ is a simply connected K\"ahler manifold, and $G$ is a covering transformation group whose elements are conformal for the respective K\"ahler metric on $\tilde{M}$. This restricts the class of manifolds we can consider, but it is not enough to specify a manageable class. Fortunately, it turns out that there is a class of locally conformally K\"ahler manifolds that has been completely characterized, namely those whose local K\"ahler metric is flat, thanks to the following proposition \cite{VaismanI}:

\begin{thm}
Let $M$ be a compact locally conformally K\"ahler-flat manifold of complex dimension $n$. Then the universal covering space of $M$ is $\mathbb{C}^{n}\backslash \left\{ 0\right\}$, and up to a global conformal change of the metric, $M$ is a generalized Hopf manifold with the canonical metric. Every such manifold $M$ has the same Betti numbers as the Hopf manifold $H^{n}$ of the same complex
dimension $n$. 
\end{thm}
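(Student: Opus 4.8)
The plan is to recognize the locally conformally K\"ahler-flat condition as a flat complex-similarity structure on $M$, to invoke the rigidity of such structures on \emph{closed} manifolds (following \cite{VaismanI}) in order to identify the universal cover with $\mathbb{C}^{n}\setminus\{0\}$, and finally to read off the Betti numbers from the resulting Hopf model by a transfer argument. Here I use that the l.c.K. structure is genuine, $[\varphi]\neq0$ — equivalently $M$ is not globally conformally K\"ahler; in the contrary case $M$ is a compact flat K\"ahler manifold, with universal cover $\mathbb{C}^{n}$, which the conclusion excludes. Note also that for $n=1$ the identity $d\omega=\varphi\wedge\omega$ with $\omega$ of top degree forces $\varphi=0$, so that case degenerates to the flat two-torus (Betti numbers $1,2,1$), and I may assume $n\geq2$.

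First I would use the structure theorem recalled above: $M=\tilde M/G$ with $\tilde M$ simply connected, $G=\pi_{1}(M)$ acting freely and properly discontinuously by holomorphic homotheties, and the l.c.K.-flat hypothesis making the K\"ahler metric on $\tilde M$ flat. Hence $M$ carries a $(\mathrm{Sim}_{\mathbb{C}}(n),\mathbb{C}^{n})$-structure, where $\mathrm{Sim}_{\mathbb{C}}(n)=\mathbb{C}^{n}\rtimes(\mathbb{R}_{>0}\cdot U(n))$ is the group of holomorphic similarities of $\mathbb{C}^{n}$: it comes with a developing map $D\colon\tilde M\to\mathbb{C}^{n}$ and a holonomy $\rho\colon G\to\mathrm{Sim}_{\mathbb{C}}(n)$ with $D$ equivariant, and the scaling character $G\to\mathbb{R}_{>0}$ obtained by composition is non-trivial precisely because $[\varphi]\neq0$. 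Choosing $\gamma_{0}\in G$ with scaling $c<1$, the similarity $\rho(\gamma_{0})$ is a contraction of $\mathbb{C}^{n}$ with a unique fixed point $q_{0}$, since $I-cA$ is invertible for every $A\in U(n)$. The crux of the argument is the closed-similarity rigidity: the development of a closed similarity manifold cannot reach the fixed point $q_{0}$ of a contracting holonomy element, and, once $D$ is known to land in $\mathbb{C}^{n}\setminus\{q_{0}\}$, it is a covering map onto it; as $\mathbb{C}^{n}\setminus\{q_{0}\}$ is simply connected for $n\geq2$, $D$ is then a diffeomorphism. I expect this rigidity step — that $D$ misses $q_{0}$ and is a covering — to be the main obstacle.

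Granting this, conjugate so that $q_{0}=0$; then $\tilde M\cong\mathbb{C}^{n}\setminus\{0\}$, and since $D$ is an equivariant diffeomorphism and $G$ acts freely, $\rho$ is injective and I identify $G$ with $\rho(G)$. Each element of $G$ extends to a homeomorphism of $\mathbb{C}^{n}$ fixing $\{0\}=\mathbb{C}^{n}\setminus(\mathbb{C}^{n}\setminus\{0\})$, hence lies in the \emph{linear} similarities $\mathbb{R}_{>0}\cdot U(n)\subset GL(n,\mathbb{C})$; therefore $G_{0}:=G\cap U(n)$, being a discrete subgroup of a compact group, is finite, while $G/G_{0}\hookrightarrow\mathbb{R}_{>0}$ is a non-trivial discrete subgroup, so it is infinite cyclic and $G$ is virtually cyclic. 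Rescaling the flat metric on $\mathbb{C}^{n}\setminus\{0\}$ by $|z|^{-2}$ produces a metric invariant under all of $\mathbb{R}_{>0}\cdot U(n)$, hence descending to $M$; this is the canonical Hopf metric in the conformal class and exhibits $M$, up to a global conformal change, as a generalized Hopf manifold with its canonical metric. For the Betti numbers, pick a finite-index normal subgroup $N\triangleleft G$ with $N\cong\mathbb{Z}$ (possible since $G$ is virtually cyclic — take the normal core of an infinite-cyclic finite-index subgroup); then $\hat M=(\mathbb{C}^{n}\setminus\{0\})/N$ is a primary Hopf manifold, diffeomorphic to $S^{1}\times S^{2n-1}$, since its generator $\nu=cA$ has $c\neq1$, so $\hat M$ is the mapping torus of $A\in U(n)$, which is isotopic to the identity of $S^{2n-1}$. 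Thus $H^{\ast}(\hat M;\mathbb{R})$ is one-dimensional in degrees $0,1,2n-1,2n$ and zero otherwise, and by the transfer isomorphism for the regular cover $\hat M\to M$ with deck group $Q=G/N$ one has $b_{j}(M)=\dim H^{j}(\hat M;\mathbb{R})^{Q}$. It remains to see $Q$ acts trivially on $H^{\ast}(\hat M;\mathbb{R})$: on $H^{0}$ and $H^{2n}$ because holomorphic diffeomorphisms preserve orientation; on $H^{1}$ because the Lee class $[\varphi]$ is pulled back from $M$, hence $Q$-fixed, and spans it; and on $H^{2n-1}$ because the cup pairing $H^{1}\times H^{2n-1}\to H^{2n}$ is $Q$-equivariant and perfect. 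Hence $b_{j}(M)=b_{j}(\hat M)=b_{j}(H^{n})$ for all $j$, completing the proof.
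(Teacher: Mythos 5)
The paper itself offers no proof of this statement: it is quoted directly from Vaisman's work \cite{VaismanI}, so there is no internal argument to compare yours against. Judged on its own terms, your strategy is the standard modern route. Reading the l.c.K.-flat condition as a flat holomorphic similarity structure, i.e.\ a $(\mathrm{Sim}_{\mathbb{C}}(n),\mathbb{C}^{n})$-structure whose scaling character is non-trivial exactly when $[\varphi]\neq 0$, is correct (and your handling of the degenerate cases $[\varphi]=0$ and $n=1$ is a reasonable reading of the intended hypotheses). Granting that the developing map is a diffeomorphism onto $\mathbb{C}^{n}\setminus\{q_{0}\}$, the rest of what you write is sound: the holonomy group consists of linear similarities, its unitary part is finite and its scaling image is discrete, hence infinite cyclic; the Boothby metric $|z|^{-2}g_{\mathrm{flat}}$ is invariant under all of $\mathbb{R}_{>0}\cdot U(n)$, descends to $M$, has parallel Lee form, and the pullback of $g$ to the cover is globally conformal to it, which gives the ``up to a global conformal change'' clause; and the Betti-number computation by passing to an intermediate primary Hopf manifold $\hat M\cong S^{1}\times S^{2n-1}$ and using the transfer isomorphism is correct, the only elision being that you should note explicitly that the pullback of $[\varphi]$ to $\hat M$ is non-zero (its period on the generator of $N$ is $\log c\neq 0$), which is what makes it span $H^{1}(\hat M;\mathbb{R})$.

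The genuine issue is the step you yourself flag as ``the main obstacle'': that for a closed similarity manifold with a contracting holonomy element the developing image misses the fixed point $q_{0}$ and $D$ is a covering onto $\mathbb{C}^{n}\setminus\{q_{0}\}$. This is not a technical lemma but the substantive content of the theorem --- it is Fried's rigidity theorem for closed similarity manifolds (a closed similarity manifold is either a complete Euclidean space form or radiant, with developing map a covering onto the complement of a point), which is the ingredient the literature behind \cite{VaismanI} ultimately relies on. Note that compactness of $M$ enters your argument essentially nowhere else: discreteness of the holonomy, the descent of the Boothby metric, and the transfer argument all work without it, and the conclusion is false for open manifolds, so whatever proof one gives must concentrate the use of compactness precisely in this rigidity statement. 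If you are allowed to invoke that theorem as an external input, your proof is complete and is essentially the accepted proof; as a self-contained argument it has a gap at exactly this point, and nothing in your proposal indicates how you would close it.
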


\noindent
A generalized Hopf manifold is a locally conformally K\"ahler manifold such that its Lee-form is a parallel form. Among the generalized Hopf manifolds are of course the classical Hopf manifolds. Four-dimensional complex Hopf manifolds are equipped with a $SU(4)$-structure and thus it is an example of a non-trivial compact locally conformal Calabi-Yau manifold. In particular the metric of a Hopf manifold is not only locally Ricci-flat but locally flat. 

Let us explore then the geometry of compact complex Hopf manifolds, since they provide us with a non-trivial example of locally conformal Calabi-Yau manifolds.

%%%%%%%%%%%%%%%%%%%%%%%%%%%%%%%%%%%%%%%%%%%%%%%%%%%%%%%%%%%%%%%%%%%%%%%%%%%%%%%%%%%%%%%%%%%%%%%%%%%%%%%%%%%%%%%%%%%%%%%%%%%%
%%%%%%%%%%%%%%%%%%%%%%%%%%%%%%%%%%%%%%%%%%%%%%%%%%%%%%%%%%%%%%%%%%%%%%%%%%%%%%%%%%%%%%%%%%%%%%%%%%%%%%%%%%%%%%%%%%%%%%%%%%%%

\subsection{Complex Hopf manifolds}
\label{sec:Hopfmanifold}

%%%%%%%%%%%%%%%%%%%%%%%%%%%%%%%%%%%%%%%%%%%%%%%%%%%%%%%%%%%%%%%%%%%%%%%%%%%%%%%%%%%%%%%%%%%%%%%%%%%%%%%%%%%%%%%%%%%%%%%%%%%%
%%%%%%%%%%%%%%%%%%%%%%%%%%%%%%%%%%%%%%%%%%%%%%%%%%%%%%%%%%%%%%%%%%%%%%%%%%%%%%%%%%%%%%%%%%%%%%%%%%%%%%%%%%%%%%%%%%%%%%%%%%%%

In this section we are going to introduce complex Hopf manifolds, mainly to give an explicit example of the class of manifolds found in this note, that so far has been only abstractly characterized. The main goal is to obtain the explicit Lee form of a locally conformally Calabi-Yau manifold, since it globally encodes all the locally defined warp factors, and locally gives an explicit expression for the corresponding warp factor of the compactification.

A complex Hopf manifold of complex dimension $n$ is the quotient of $\mathbb{C}^{n}\backslash \left\{ 0\right\}$ by the free action of the infinite cyclic group $\mathfrak{G}_{\lambda}$ generated by $z\to \lambda z$, where $\lambda\in\mathbb{C}^{\ast}$ and $0<|\lambda| <1$. In other words, it is $\mathbb{C}^{n}\backslash \left\{ 0\right\}$ quotiented by the free action of $\mathbb{Z}$, where $\mathbb{Z}$, with generator $\lambda$ acting by holomorphic contractions. The group $\mathfrak{G}_{\lambda}$ acts freely on $\mathbb{C}^{n}\backslash \left\{ 0\right\}$ as a properly discontinuous group of complex analytic transformations of $\mathbb{C}^{n}\backslash \left\{ 0\right\}$, so that the quotient space

\begin{equation}
CH^{n}_{\lambda} = \left( \mathbb{C}^{n}\backslash \left\{ 0\right\}\right)/\mathfrak{G}_{\lambda}\, ,
\end{equation}

\noindent
is a complex manifold of dimension $n$. Complex $n$-dimensional Hopf manifolds $CH^{n}_{\lambda}$ are diffeomorphic to $S^{2n-1}\times S^{1}$. Therefore we have

\begin{equation}
b^{1}\left( CH^{n}_{\lambda}\right) = 1\, ,
\end{equation}

\noindent
and hence $CH^{n}_{\lambda}$ does not admit a K\"ahler metric. In particular, the standard K\"ahler structure on $\mathbb{C}^{n}\backslash \left\{ 0\right\}$ does not descend to $CH^{n}_{\lambda}$. It can be however endowed with a locally conformally K\"ahler structure. To see this, let us consider $\mathbb{C}^{n}\backslash \left\{ 0\right\}$ equipped with the following metric and (1,1)-form written in standard coordinates as follows: 

\begin{equation}
g_{0} = \frac{dz^{t}\otimes d\bar{z}}{\bar{z}^{t}z}\, , \qquad\omega_{0} = i\frac{dz^{t}\wedge d\bar{z}}{\bar{z}^{t}z}\, . 
\end{equation}

\noindent
The (1,1)-form $\omega_{0}$ is not closed but satisfy:

\begin{equation}
d\omega_{0} = \varphi_{0}\wedge\omega_{0}\, , 
\end{equation}

\noindent
where 

\begin{equation}
\varphi_{0} = \frac{z^{t}d\bar{z}+\bar{z}^{t}dz}{\bar{z}^{t}z}\, ,
\end{equation}

\noindent
Since $g_{0}$, $\omega_{0}$ are scale-invariant, they descend to a well defined metric $g$ and (1,1)-form $\omega$ in $CH^{n}_{\lambda}$, with corresponding Lee-form $\varphi$. Notice that in $\mathbb{C}^{n}\backslash \left\{ 0\right\}$ we have that $\varphi_{0}$ is exact, since

\begin{equation}
\varphi_{0} = d\left( \log z^{t}\bar{z} + c\right)\, , \qquad c^{\prime}\in\mathbb{R}\, ,
\end{equation}

\noindent
as was to be expected, since $(g_{0},\omega_{0})$ is globally conformal to the standard K\"ahler structure on $\mathbb{C}^{n}\backslash \left\{ 0\right\}$. Nonetheless, $\varphi$ is not exact in $CH^{n}_{\lambda}$, since there $\log z^{t}\bar{z}$ is not a globally defined function. Let $(U_{a}, z_{a})$ be a coordinate chart in $CH^{n}_{\lambda}$. Then, in $(U_{a}, z_{a})$ we have that 

\begin{equation}
\varphi|_{U_{a}} = d \left(\log z^{t}_{a}\bar{z}_{a} + c^{\prime}_{a}\right)\, , \qquad c^{\prime}_{a}\in\mathbb{R}\, ,
\end{equation}

\noindent
and thus the warp factor of eleven-dimensional Supergravity compactified on $CH^{4}_{\lambda}$ is, at every coordinate chart $(U_{a}, z_{a})$, given by

\begin{equation}
\Delta_{a} = c_{a} z^{t}_{a}\bar{z}_{a}\, , \qquad c\in \mathbb{R}^{\ast}\, ,
\end{equation}

\noindent
and hence, from the geometry of $CH^{4}_{\lambda}$ and the structure of the compactification we have been able to obtain the corresponding warp factor. Notice that $CH^{4}_{\lambda}$ is locally conformally isometric to euclidean space equipped with the standard flat metric. Therefore it trivially satisfies the supersymmetry conditions of eleven-dimensional Supergravity. Notice that the Euler characteristic of $CH^{n}_{\lambda}$ is zero, and thus the tadpole cancellation condition \eqref{eq:tadpole} cannot be satisfied for non-zero flux $G$. However this can be easily fixed by blowing up at a point, since this procedure increases the Euler characteristic and preserves the locally conformally K\"ahler property of the manifold \cite{Tricceri,2009arXiv0906.1657V}. 

This has been only one relatively simple example of a locally conformally K\"ahler manifold, which in particular is locally flat. However the class of locally conformally Calabi-Yau manifolds is much richer and nowadays a full classification is out of reach.

%%%%%%%%%%%%%%%%%%%%%%%%%%%%%%%%%%%%%%%%%%%%%%%%%%%%%%%%%%%%%%%%%%%%%%%%%%%%%%%%%%%%%%%%%%%%%%%%%%%%%%%%%%%%%%%%%%%%%%%%%%%%
%%%%%%%%%%%%%%%%%%%%%%%%%%%%%%%%%%%%%%%%%%%%%%%%%%%%%%%%%%%%%%%%%%%%%%%%%%%%%%%%%%%%%%%%%%%%%%%%%%%%%%%%%%%%%%%%%%%%%%%%%%%%
%%%%%%%%%%%%%%%%%%%%%%%%%%%%%%%%%%%%%%%%%%%%%%%%%%%%%%%%%%%%%%%%%%%%%%%%%%%%%%%%%%%%%%%%%%%%%%%%%%%%%%%%%%%%%%%%%%%%%%%%%%%%
%%%%%%%%%%%%%%%%%%%%%%%%%%%%%%%%%%%%%%%%%%%%%%%%%%%%%%%%%%%%%%%%%%%%%%%%%%%%%%%%%%%%%%%%%%%%%%%%%%%%%%%%%%%%%%%%%%%%%%%%%%%%

\section{Conclusions and final remarks}
\label{sec:conclusions}

%%%%%%%%%%%%%%%%%%%%%%%%%%%%%%%%%%%%%%%%%%%%%%%%%%%%%%%%%%%%%%%%%%%%%%%%%%%%%%%%%%%%%%%%%%%%%%%%%%%%%%%%%%%%%%%%%%%%%%%%%%%%
%%%%%%%%%%%%%%%%%%%%%%%%%%%%%%%%%%%%%%%%%%%%%%%%%%%%%%%%%%%%%%%%%%%%%%%%%%%%%%%%%%%%%%%%%%%%%%%%%%%%%%%%%%%%%%%%%%%%%%%%%%%%
%%%%%%%%%%%%%%%%%%%%%%%%%%%%%%%%%%%%%%%%%%%%%%%%%%%%%%%%%%%%%%%%%%%%%%%%%%%%%%%%%%%%%%%%%%%%%%%%%%%%%%%%%%%%%%%%%%%%%%%%%%%%
%%%%%%%%%%%%%%%%%%%%%%%%%%%%%%%%%%%%%%%%%%%%%%%%%%%%%%%%%%%%%%%%%%%%%%%%%%%%%%%%%%%%%%%%%%%%%%%%%%%%%%%%%%%%%%%%%%%%%%%%%%%%

In this note we have shown that it is possible to obtain a class of manifolds more general than the Calabi-Yau one as internal spaces of M-theory supersymmetric compactifications to three-dimensional Minkowski space-time preserving $\mathcal{N}=2$ supersymmetry with chiral internal spinors. In order to go beyond the Calabi-Yau result, we considered a warp factor which is defined only locally and does not extend to a globally defined function on the internal manifold $M_{8}$. In order to do this consistently, it is necessary to use a non-trivial cover of the external Minkowski space with specific coordinate transformations. The physical fields remain as well-defined tensors on the space-time.

We have chosen to name the new class of manifolds obtained in this letter as locally conformally Calabi-Yau manifolds, which form a particular class of locally conformally K\"ahler manifolds, already studied in the literature, characterized by the fact that locally, the metric is conformally Ricci-flat. 

Given this new class of compactification backgrounds, it is natural to wonder if there are physical consequences of considering F-theory compactified on them instead of a standard Calabi-Yau four-fold. 

For starters, the tadpole-cancellation condition \eqref{eq:tadpole} and the conditions on the $G$-flux are formally the same in both cases. However, the topology of Calabi-Yau manifolds and locally conformally Calabi-Yau manifolds is very different. In particular, the Hodge numbers are different and do not satisfy the same relations that hold in the Calabi-Yau case. For instance, the first Betti number $b^{1}$ of a locally conformally Calabi-Yau is non-zero, and thus we encounter the first new feature of this class of backgrounds respect to the standard Calabi-Yau ones: they have different Dolbeault-cohomology, something that has important consequences on the effective action of the corresponding compactification. Notably, the decomposition of the Supergravity fields in terms of basis of harmonic forms of the different cohomology groups will be different and thus the number of moduli and the geometry of the effective scalar manifold will be different.

In fact, locally conformally K\"ahler manifolds have a very restricted moduli space, which implies that the effective theory will contain very few moduli, possibly even none. This is a very exciting possibility from the phenomenological point of view. Unfortunately, very little is currently known about the moduli space of locally conformally K\"ahler structures on a given manifold\footnote{Private communication by Liviu Ornea.}.

For F-theory applications, we need $M_{8}$ to be elliptically fibered in order to be able to make the lift from three-dimensional to four dimensionsal Minkowski space-time. In the standard Calabi-Yau case, being a compact, Ricci-flat, locally irreducible manifold, it has no continuous isometries and there are no smooth non-trivial elliptic fibrations. The presence of singularities is then related to the presence of a particular number of D7-branes, which back-react to the geometry and as a consequence the profile of the axiodilaton, the complex structure of the torus-fibre at every point in the base, become necessarily non-trivial. When there are no D7-branes there is no backreaction and thus the axiodilaton is constant and the elliptic fibration is trivial.

However, the situation for compact locally conformally Calabi-Yau manifolds is very different: they are not Ricci-flat and therefore they may admit isometries and in particular, non-trivial elliptic fibrations without singularities, namely principal $T^{2}$-bundles. As a matter of fact, proposition \ref{prop:canonicalkilling} shows that if the canonical vector field on a locally conformally K\"ahler manifold is Killing, then there is a pair of commuting vector fields on the manifold. This is precisely the right starting point for this pair of vector fields to define an infinitesimal torus action on the manifold, as explained in \cite{Grana:2014rpa}.

Locally conformally K\"ahler manifolds are expected to admit singular non-trivial elliptic fibrations, although this is still an open question: one of the novelties here respect to the Calabi-Yau case is the presence of also regular non-trivial elliptic fibrations. The interpretation, if any, within the context F-theory of non-singular non-trivial elliptic fibrations is unclear.

Let us end by saying that the procedure used in this note can be extended to other String/M-theory compactifications, and as a result we can obtain new, unexpected, admissible compactification backgrounds.

%%%%%%%%%%%%%%%%%%%%%%%%%%%%%%%%%%%%%%%%%%%%%%%%%%%%%%%%%%%%%%%%%%%%%%
%%%%%%%%%%%%%%%%%%%%%%%%%%%%%%%%%%%%%%%%%%%%%%%%%%%%%%%%%%%%%%%%%%%%%%
%%%%%%%%%%%%%%%%%%%%%%%%%%%%%%%%%%%%%%%%%%%%%%%%%%%%%%%%%%%%%%%%%%%%%%
%%%%%%%%%%%%%%%%%%%%%%%%%%%%%%%%%%%%%%%%%%%%%%%%%%%%%%%%%%%%%%%%%%%%%% 

\acknowledgments

I would like to thank Pablo Bueno, Mariana Gra\~na, Thomas W. Grimm, Dominic Joyce, Ruben Minasian, Tom Pugh, Raffaele Savelli and Marco Zambon for very useful discussions and comments. This work was supported in part by the ERC Starting Grant 259133 -- ObservableString.

%%%%%%%%%%%%%%%%%%%%%%%%%%%%%%%%%%%%%%%%%%%%%%%%%%%%%%%%%%%%%%%%%%%%%%
%%%%%%%%%%%%%%%%%%%%%%%%%%%%%%%%%%%%%%%%%%%%%%%%%%%%%%%%%%%%%%%%%%%%%%
%%%%%%%%%%%%%%%%%%%%%%%%%%%%%%%%%%%%%%%%%%%%%%%%%%%%%%%%%%%%%%%%%%%%%%
%%%%%%%%%%%%%%%%%%%%%%%%%%%%%%%%%%%%%%%%%%%%%%%%%%%%%%%%%%%%%%%%%%%%%% 

\appendix

% The bibliography will probably be heavily edited during typesetting.
% We'll parse it and, using the arxiv number or the journal data, will
% query inspire, trying to verify the data (this will probalby spot
% eventual typos) and retrive the document DOI and eventual errata.
% We however suggest to always provide author, title and journal data:
% in short all the informations that clearly identify a document.

\renewcommand{\leftmark}{\MakeUppercase{Bibliography}}
\phantomsection
\bibliographystyle{JHEP}
\bibliography{C:/Users/cshabazi/Dropbox/Referencias/References}
\label{biblio}
\end{document}